\newtheorem{theorem}{Theorem}
\newtheorem{definition}{Definition}
\newtheorem{corollary}{Corollary}
\newtheorem*{remark}{Remark}
\begin{document}

\title{QoS-Aware Load Balancing in the Computing Continuum via Multi-Player Bandits}

\author{
    \IEEEauthorblockN{
    Ivan Čilić \orcidlink{0009-0008-9330-1924}, \textit{Graduate Student Member, IEEE},
    Ivana Podnar Žarko \orcidlink{0000-0001-5619-2142}, \textit{Member, IEEE}, \\
    Pantelis A. Frangoudis \orcidlink{0000-0001-6901-7714}, 
    and Schahram Dustdar \orcidlink{0000-0001-6872-8821}, \textit{Fellow, IEEE}}

\thanks{This work has been supported in part by the European Union’s Horizon Europe research and innovation programme under grant agreement No. 101079214 (AIoTwin) and by the European Regional Development Fund (grant No. KK.01.1.1.01.0009, project DATACROSS). }
\thanks{Ivan Čilić and Ivana Podnar Žarko are with the Faculty of Electrical Engineering and Computing, University of Zagreb, 10000 Zagreb, Croatia \{e-mail: ivan.cilic@fer.hr; ivana.podnar@fer.hr\}.}
\thanks{Pantelis A. Frangoudis and Schahram Dustdar are with the Distributed Systems Group, TU Wien, 1040 Vienna, Austria \{e-mail: p.frangoudis@dsg.tuwien.ac.at; dustdar@dsg.tuwien.ac.at\}.}
\thanks{Manuscript received April 19, 2021; revised August 16, 2021.}}

\markboth{IEEE Transactions on Services Computing,~Vol.~X, No.~Y, Month~Year}%
{Ivan Čilić \MakeLowercase{\textit{et al.}}: QoS-Aware Load Balancing in the Computing Continuum via Multi-Player Bandits}


\maketitle

\begin{abstract}
As computation shifts from the cloud to the edge to reduce processing latency and network traffic, the resulting Computing Continuum (CC) creates a dynamic environment where meeting strict Quality of Service (QoS) requirements and avoiding service instance overload becomes challenging. Existing methods often prioritize global metrics and overlook per-client QoS, which is crucial for latency-sensitive and reliability-critical applications.
We propose QEdgeProxy, a decentralized QoS-aware load balancer that acts as a proxy between IoT devices and service instances in the CC. We formulate the load balancing problem as a Multi-Player Multi-Armed Bandit (MP-MAB) with heterogeneous rewards: Each load balancer autonomously selects service instances to maximize the probability of meeting its clients’ QoS requirements by using Kernel Density Estimation (KDE) to estimate QoS success probabilities. Our load-balancing algorithm also incorporates an adaptive exploration mechanism to recover rapidly from performance shifts and non-stationary conditions.
We present a Kubernetes-native QEdgeProxy implementation and evaluate it on an emulated CC testbed deployed on a K3s cluster with realistic network conditions and a latency-sensitive edge-AI workload. Results show that QEdgeProxy significantly outperforms proximity-based and reinforcement-learning baselines in per-client QoS satisfaction, while adapting effectively to load surges and changes in instance availability.
\end{abstract}

\begin{IEEEkeywords}
edge computing, load balancing, Quality of Service (QoS), Multi-Player Multi-Armed Bandit (MP-MAB), Reinforcement Learning (RL), Kubernetes
\end{IEEEkeywords}

\section{Introduction}

The growing scale of Internet of Things (IoT) deployments has made routing all data to remote cloud servers for processing increasingly inefficient. This centralized model leads to increased latency, congestion, and limited responsiveness for real-time applications. The Computing Continuum (CC) addresses these challenges by distributing computation across IoT, edge, fog, and cloud nodes, enabling data to be processed closer to its source to improve both performance and resource utilization.
Despite these advantages, the CC remains a challenging environment as it is highly dynamic: nodes differ in resources, load, availability, and network conditions, while IoT workloads fluctuate over time and space. Ensuring that services deployed across this heterogeneous continuum continue to meet strict Quality of Service (QoS) requirements therefore demands mechanisms that can adapt to changing conditions.

Reliable service delivery in the CC depends on the explicit definition and enforcement of QoS requirements. These requirements specify measurable performance guarantees, typically expressed through Service Level Objectives (SLOs) and Service Level Agreements (SLAs)~\cite{wsla}. SLOs define concrete targets, such as ensuring that a given percentage of requests remain below a latency threshold, while SLAs formalize these guarantees contractually between providers and consumers.
QoS requirements in the CC may be expressed along multiple dimensions, including latency, throughput, reliability, availability, privacy, and security. Their importance varies across application domains; for example, industrial control and autonomous systems prioritize ultra-low latency, whereas healthcare applications may emphasize both privacy and availability. Formalizing such QoS constraints provides a foundation for coordinating placement, routing, and adaptation strategies so that services remain within their expected performance boundaries under dynamic and heterogeneous CC conditions.

This work addresses two key challenges that arise once service instances are deployed in the CC across different layers of the continuum: (i) guaranteeing QoS for continuous IoT data processing for each client through adaptive, decentralized request routing under uncertainty, and (ii) maintaining continuous data delivery from multiple independent clients without overloading service instances on resource-constrained devices or compute nodes.

To support QoS-aware routing in dynamic CC conditions, our previous work~\cite{QEdgeProxy} introduced \textit{QEdgeProxy}, a lightweight decentralized load balancer (LB) deployed on every CC node. QEdgeProxy acts as an intermediary between clients and service instances, selecting a suitable instance for each client request based on locally observed QoS parameters while continuously adapting its routing decisions as conditions in the CC change. Rather than selecting a single optimal instance that maximizes QoS for each client request, it maintains a pool of instances (\textit{QoS pool}) capable of meeting the QoS requirements with high probability. This approach enables effective load balancing while preserving per-client QoS guarantees. QEdgeProxy is designed to handle both client-to-service and service-to-service interactions, allowing IoT devices and intermediate services to forward requests through a QEdgeProxy instance deployed on their nearest or hosting node. By maintaining and dynamically updating a QoS pool, QEdgeProxy ensures balanced request distribution and mitigates instance overload while satisfying QoS requirements for each client.

While the high-level architectural design and preliminary proof-of-concept implementation of QEdgeProxy were introduced in~\cite{QEdgeProxy}, the load-balancing logic was limited to constructing QoS pools based on average latency measurements and distributing requests within the pool using simple round-robin scheduling. This paper significantly extends and formalizes the load-balancing mechanism by developing the underlying algorithms for adaptive QoS-pool maintenance and request routing. We formulate the problem as a Multi-Player Multi-Armed Bandit (MP-MAB) with heterogeneous rewards~\cite{slivkins2019} and provide a decentralized learning procedure—built on Kernel Density Estimation (KDE)-based QoS estimation, adaptive exploration, and instance-specific weight updates—that enables QEdgeProxy to operate robustly under dynamic and non-stationary CC conditions. This work provides the theoretical foundation, detailed algorithmic implementation, and a comprehensive evaluation that were not addressed in the initial QEdgeProxy prototype, resulting in a complete and rigorous load-balancing solution.

Unlike most prior work that prioritizes global performance metrics, such as minimizing average latency or total delay~\cite{proximity,wang2024decentralized}, or relies on partial coordination among decision makers~\cite{AdeQoS,Dec-SARSA}, this paper focuses on guaranteeing \emph{per-client} QoS requirements through fully decentralized load balancing. By formulating routing as an MP-MAB with heterogeneous and non-stationary rewards, our approach enables each LB to adapt routing decisions based solely on local QoS feedback, without centralized control or inter-node coordination.

The main contributions can be summarized as follows:

\begin{itemize}
    \item \textbf{MP-MAB problem formulation for QoS-aware load balancing.}
    We formulate the load-balancing problem in CC as an MP-MAB with heterogeneous and non-stationary rewards. Unlike classical MAB or offloading models based on single-action selection, our formulation captures weighted request routing, shared service instances across multiple LBs, and strict per-client QoS requirements.

    \item \textbf{Decentralized algorithms for QoS-aware load balancing in CC.}
    We develop a decentralized set of algorithms that solve the proposed MP-MAB formulation and enable QoS-aware load balancing in dynamic CC environments.
    Each instance of LB autonomously maintains a \textit{QoS pool}---a set of service instances predicted to meet the required QoS---using a combination of $\varepsilon$-decay exploration and KDE to estimate instance-level QoS success probabilities. Furthermore, it incorporates an adaptive exploration mechanism that reduces exploration as confidence grows, but increases it when QoS degradation or performance shifts are detected, enabling rapid recovery in non-stationary conditions.
    Requests are routed using a Smooth Weighted Round Robin (SWRR) mechanism parameterized by these instance-level probabilities, ensuring balanced utilization of service instances while preserving per-client QoS guarantees.

    \item \textbf{Kubernetes-native implementation.}
    We implement the algorithms within a Kubernetes-native component (QEdgeProxy), enabling each node to run a decentralized QoS-aware load balancer that maintains QoS pools, performs adaptive weighted routing, and obtains node and instance updates through an MQTT-based informer to reduce the monitoring overhead. This design demonstrates that decentralized QoS-aware load balancing can be seamlessly integrated into existing CC orchestration solutions with minimal configuration.

    \item \textbf{Real-world validation in a distributed CC testbed.}
    We evaluate QEdgeProxy instances deployed across a real K3s cluster emulating a 30-node CC topology with realistic network latencies. 
    Using a latency-sensitive IoT edge-AI workload, we show that our solution outperforms proximity-based and reinforcement-learning baselines, achieving higher per-client QoS satisfaction, improved fairness, robust adaptability to dynamic changes, and low resource overhead suitable for constrained edge devices.
\end{itemize}

The remainder of this paper is organized as follows. 
Section~\ref{section:related} reviews related work on QoS-aware routing, decentralized load balancing, and learning-based decision-making in the CC. 
Section~\ref{section:model} introduces the system model, describing both design-time and runtime behavior of the system, and QoS requirements definition. 
Section~\ref{section:problem} formalizes the decentralized QoS-aware load balancing problem as an MP-MAB problem with heterogeneous rewards. 
Section~\ref{section:algorithms} proposes the decentralized load-balancing algorithms implemented in QEdgeProxy, including QoS-pool construction, KDE-based estimation, $\varepsilon$-decay exploration, and SWRR-based request routing. 
Section~\ref{section:impl-qedgeproxy} details the Kubernetes-native implementation of the QEdgeProxy, and Section~\ref{section:evaluation} reports the experimental evaluation conducted on an emulated CC testbed. 
Finally, Section~\ref{section:conclusion} concludes the paper. In Table~\ref{tab:notation} we listed all the symbols used in the paper.

\begin{table}[ht]
\centering
\caption{List of symbols used.}
\label{tab:notation}
\renewcommand{\arraystretch}{1.2}
\begin{tabular}{ll}
\hline
\textbf{Symbol} & \textbf{Description} \\
\hline
\multicolumn{2}{c}{Model} \\[-1pt]
\hline
$\mathcal{K}$ & Set of load balancers (LB); one per CC node \\
$S$ & IoT service deployed in the CC \\ [2pt]
$(\rho, \tau, W)$ & \makecell[l]{QoS requirements: required success ratio $\rho$ \\and latency threshold $\tau$ evaluated over \\window $W$} \\
$\mathcal{M}(t)$ & Active service instances \\
$l_{k,m}^n(t)$ & Network RTT between LB $k$ and instance $m$\\
$l_m^p(t)$ & Processing time of instance $m$ \\ [2pt]
$\ell_{k,m}(t)$ & \makecell[l]{End-to-end latency for a request from LB $k$ \\processed by instance $m$} \\
$\lambda_k(t)$ & Request arrival rate at LB $k$ \\
$\Lambda_m(t)$ & Request arrival rate at instance $m$ \\
$w_{k,m}(t)$ & Routing weight assigned by LB $k$ to instance $m$ \\
$r_{k,m}(t)$ & Reward for LB $k$ routing a request to $m$ \\ [2pt]
$\mathcal{R}_k(t;W)$ & \makecell[l]{Set of requests handled by load balancer $k$ in the \\window $[t{-}W,\,t)$} \\ [2pt]
$\mu_{k,m}(t)$ & \makecell[l]{Expected QoS success probability for LB $k$ on \\instance $m$} \\
\hline
\multicolumn{2}{c}{Algorithms} \\[-1pt]
\hline
$\varepsilon(t)$ & Exploration factor \\
$\gamma$ & $\varepsilon$-decay factor \\
$\eta$ & Score smoothing factor \\
$\mathcal{F}_k(t)$ & Set of feasible instances \\
$\mathcal{E}_k(t)$ & Exploitation pool \\
$\mathcal{X}_k(t)$ & Exploration pool \\
$\mathcal{Q}_k(t)$ & QoS pool \\
$\hat{\rho}_k(t)$ & Empirical QoS success ratio at LB $k$ over window $W$ \\
$E_t$ & Error count threshold \\ 
$\Delta_{\mathrm{cd}}$ & Cooldown duration \\
\hline
\multicolumn{2}{l}{* Time-dependent quantities are indexed by discrete time $t$.}
\end{tabular}
\vspace{-0.2cm}
\end{table}

\section{Related work}
\label{section:related}

Adaptive and QoS-aware routing in the CC has been investigated from several perspectives, including proximity-aware routing, decentralized multi-agent scheduling, and QoS-aware decision making based on multi-armed bandit (MAB) formulations. This section reviews the most relevant approaches and discusses their relation to our proposed load-balancing mechanism.

Fahs and Pierre introduced \textit{Proxy-mity}~\cite{proximity}, a proximity-aware traffic routing system designed as a plugin for Kubernetes. Their goal was to make Kubernetes' routing aware of geographical distribution by enabling a trade-off between load balancing and latency reduction through a controllable parameter~$\alpha$. When $\alpha=0$, traffic is evenly distributed across all service replicas, whereas $\alpha=1$ prioritizes routing to the closest instance. The approach seamlessly integrates with Kubernetes and reduces the request latency by up to 90\% in experimental fog setups. However, Proxy-mity considers only network latency as a proximity metric and does not incorporate feedback from actual request processing times. Moreover, it does not allow the explicit definition of QoS constraints such as maximum latency thresholds or success ratios. 

Liu and Fang~\cite{AdeQoS} proposed \textit{AdeQoS}, a decentralized multi-player multi-armed bandit (MP-MAB) algorithm that guarantees per-player QoS satisfaction. Their model assumes that each player (e.g., a load balancer) interacts with a set of arms (service instances) without coordination or observation of others' actions. Unlike conventional reward maximization, AdeQoS tries to ensure that each player maintains an expected reward above a given QoS threshold, achieving $O(1)$ QoS regret. This approach is closely aligned with our formulation (see Section~\ref{section:problem}), as both aim to meet per-client QoS constraints rather than maximizing aggregate utility. However, AdeQoS assumes that the number of arms exceeds the number of players and that collisions occur when two players select the same arm, which differs from our model (Section~\ref{section:model}) where each service instance can handle multiple requests simultaneously. Moreover, AdeQoS relies on a leader-based coordination phase to synchronize weights, whereas our solution is fully decentralized, with each load balancer independently adapting weights through local QoS measurements. Nonetheless, both methods share the key objective of distributed QoS maintenance without centralized orchestration.

Wang \emph{et al.} \cite{wang2024decentralized} proposed a decentralized task offloading framework for edge computing formulated as an MAB problem. Their approach enables multiple users to offload computational tasks to nearby edge servers through an auction-like interaction where each user selects one server per round to minimize its own latency. While the algorithm supports distributed decision-making and achieves sublinear regret, it assumes that all agents share information via a coordinating leader that announces the winning allocations after each auction round. This requirement of centralized synchronization limits applicability in fully decentralized CC environments. Moreover, each user exploits only a single server following the auction, which constrains scalability for load balancing where multiple clients concurrently access shared servers. 

Zheng \emph{et al.} \cite{tahir2025computation} extended this research by introducing an auction-based computation-offloading framework under incomplete information. Their system couples distributed MAB learning with a heterogeneous auction mechanism to adapt to dynamic network and load variations. The \textit{TODAuction-MAB} algorithm uses the estimated rewards to decide how to allocate servers to each user through an auction-based mechanism, while the \textit{HD-MAB} algorithm adds sliding-window updates to handle non-stationary conditions. Despite strong performance and theoretical guarantees, their architecture still relies on a leader node to coordinate distributed auctions and assumes that each server handles at most one user at a time, with the total number of servers exceeding the number of users. These assumptions simplify the offloading model but limit generalization to realistic CC scenarios.

Mattia and Beraldi \cite{Dec-SARSA} proposed \textit{Dec-SARSA}, a differential SARSA\footnote{SARSA (State–Action–Reward–State–Action) is an on-policy reinforcement learning algorithm that updates action values based on the current transition and the policy’s own next action, enabling agents to learn optimal behavior through direct interaction with the environment~\cite{rl-intro}.} reinforcement-learning (RL) algorithm for online decentralized task scheduling in fog computing. Each fog node operates as an autonomous agent that learns to offload or execute tasks locally to maximize the fraction of tasks meeting their deadlines. The algorithm learns online using only local state and reward feedback, without centralized coordination, and has been validated both in simulation and in a real deployment. Similar to our design, they define QoS satisfaction as meeting a latency threshold and compute the total latency as the sum of network delay and processing (queuing) time. However, before deciding to offload a task, a node must explicitly query neighboring nodes to obtain their current load, which limits scalability and practicality in environments where such inter-node communication is not guaranteed.

Most of the reviewed approaches primarily focus on minimizing average latency or total delay~\cite{wang2024decentralized,tahir2025computation,brun2012worstcase,yu2023iotmicroservices,zhu2023dynamicstream,mattia2023lb,ASLANPOUR2024266}, but they rarely formulate QoS as a strict per-client constraint defined by a latency threshold~\cite{AdeQoS,Dec-SARSA,nezami2021decentralizedlb}. Consequently, they aim for global performance improvement rather than ensuring that each client consistently meets its QoS target. Furthermore, these works rarely consider a realistic CC deployment scenario in which service instances are already pre-deployed, e.g., by an orchestration framework, and can simultaneously serve multiple clients. Among recent studies, there is a clear methodological trend toward modeling routing and offloading as MAB~\cite{wang2024decentralized,tahir2025computation,yahya2024decoffloading} or RL~\cite{Dec-SARSA} problems, enabling adaptive decision making under uncertainty. However, many of these methods still rely on partial centralization, such as leader-based coordination in auction mechanisms or explicit inter-node communication for load estimation, and thus fail to achieve full decentralization.  

This state-of-the-art review has motivated us to formulate our routing problem as a multi-player multi-armed bandit, enabling adaptive learning of instance selection under uncertainty while ensuring decentralized operation (see Section~\ref{section:problem}). Instead of relying on one-to-one offloading decisions or coordinated auctions, our approach introduces a decentralized load balancer that continuously adjusts routing weights based on locally observed QoS feedback. By combining the adaptive exploration–exploitation principles of the MAB framework with dynamic weight adaptation, the proposed \textbf{QEdgeProxy} achieves decentralized and scalable routing that maintains per-client QoS satisfaction and balances load across multiple active instances in real time.

\section{System Model}
\label{section:model}

We consider a decentralized system consisting of:
\begin{itemize}
    \item $K$ load balancers, indexed by $k \in \mathcal{K} = \{1, \dots, K\}$.
    \item Service $S$ associated with a set of instances~$\mathcal{M} = \{1, \dots, M\}$. 
\end{itemize}

Each LB $k$ serves a client request stream at a known rate $\lambda_k \geq 0$, which may vary across LBs. Each client is associated with a single LB responsible for routing its traffic. 
For each LB $k$, we denote by
\begin{equation}
    w_k(t) = \big(w_{k,1}(t), \ldots, w_{k,M(t)}(t)\big),
\end{equation}
\begin{equation}
    \sum_{m \in \mathcal{M}(t)} w_{k,m}(t) = 1
\end{equation}
the \emph{routing weight vector} at time~$t$, where $w_{k,m}(t)$ represents the fraction of traffic that LB~$k$ forwards to instance~$m$. 

Each service is characterized by its resource requirements (CPU, memory, and network bandwidth) and the QoS requirements~$(\tau, \rho, W)$\footnote{Following our related work overview in~\cite{cilicPerf}, latency is selected as the QoS parameter, as it is the most commonly used metric.}, where:
\begin{itemize}
    \item $\tau$ is the maximum acceptable end-to-end latency per request, i.e., the request–response round-trip time.
    \item $\rho \in (0, 1]$ is the minimum required fraction of requests that must meet the latency threshold $\tau$. 
    \item $W$ is the time window (in time units) over which the QoS requirements are evaluated.
\end{itemize}
The QoS requirements must be satisfied for each client within a sliding time window of duration~$W$, such that at least a fraction~$\rho$ of the client's requests are served within the latency threshold~$\tau$.

We focus on a single service~$S$ to simplify the problem formulation and analysis. In practical CC deployments, multiple services may coexist, each with its own set of instances and QoS requirements. In such cases, the proposed model and routing mechanisms are applied independently per service, resulting in separate routing weight vectors for each service. 

During runtime, the following parameters may vary:
\begin{itemize}
    \item $l_{k,m}^{n}(t)$: \underline{\textbf{n}}etwork latency between LB $k$ and instance $m$ (round-trip).
    \item $l_m^{p}(t)$: request \underline{\textbf{p}}rocessing latency of instance $m$, including queuing and execution delay.
    \item $\lambda_k(t)$: incoming request rate at LB $k$.
    \item $\Lambda_m(t)$: incoming request rate at instance $m$.
\end{itemize}

From these parameters, the end-to-end latency for a request from LB $k$ processed by instance $m$ at time $t$ is given by:
\begin{equation}
\ell_{k,m}(t) = l_{k,m}^n(t) + l_m^p(t),
\end{equation}
which determines whether the QoS constraint $\ell_{k,m}(t) \leq \tau$ is satisfied or not.

Fig.~\ref{fig:decentralized-LB}\footnote{The figure was inspired by a similar example presented in~\cite{brun2012worstcase}.} illustrates the decentralized load-balancing model considered in this work. Each LB $k \in \mathcal{K}$ receives an aggregate incoming request rate $\lambda_k(t)$ from its assigned clients and routes these requests toward the active service instances $m \in \mathcal{M}(t)$ based on adaptive routing weights $w_{k,m}(t)$. The weights determine the fraction of traffic that LB~$k$ forwards to each instance, while taking into account network conditions and the observed QoS performance.

\begin{figure}[ht]
\centering
\includegraphics[width=0.9\linewidth]{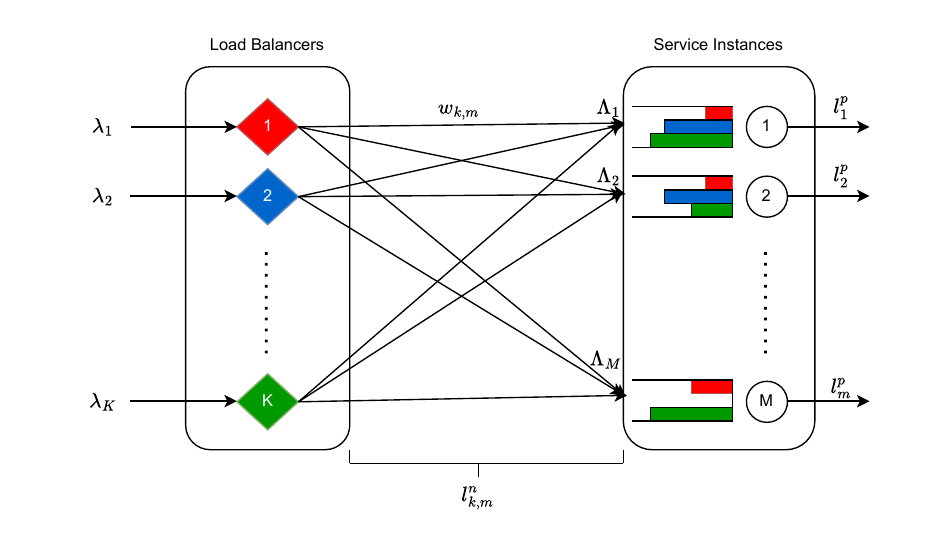}
\caption{Decentralized load balancing.}
\label{fig:decentralized-LB}
\end{figure}

\section{MP-MAB Problem Formulation}
\label{section:problem}

The runtime load-balancing problem is modeled as a \textbf{multi-player multi-armed bandit (MP-MAB)} process with \textbf{heterogeneous rewards}, where multiple load balancers (players) concurrently select service instances (arms) to route their incoming requests.

\begin{itemize}
    \item Each load balancer (LB) $k \in \mathcal{K}$ maintains a probability distribution (weight vector) $w_k(t)$ which determines the fraction of its incoming request rate $\lambda_k(t)$ that is routed to each active instance $m \in \mathcal{M}(t)$. We assume that $K \ge M(t)$ holds at any time $t$, reflecting the common CC configuration where service instances are deployed only on a subset of available nodes and can serve multiple requests simultaneously. By design, our system deploys one LB per node.
    \item The aggregate incoming request rate at instance $m$ is
    \begin{equation}
    \Lambda_m(t) = \sum_{k \in \mathcal{K}} w_{k,m}(t)\,\lambda_k(t).
    \end{equation}
    \item The instantaneous reward for LB $k$ when it routes a request to instance $m$ is defined as
    \begin{equation}
    r_{k,m}(t) =
    \begin{cases}
        1, & \text{if } \ell_{k,m}(t) \le \tau, \\
        0, & \text{otherwise,}
    \end{cases}
    \end{equation}
    where $\ell_{k,m}(t)$ is the end-to-end latency observed by LB $k$ through instance $m$ at time $t$.
    \item The mean reward is $\mu_{k,m} = \mathbb{P}(\ell_{k,m}(t) \le \tau)$, which is:
    \begin{itemize}
        \item \textbf{heterogeneous} across LBs due to different network latencies $l_{k,m}^n(t)$, and
        \item \textbf{time-varying} due to changes in processing latency $l_m^p(t)$ caused by collisions.
    \end{itemize}
\end{itemize}

\subsection{Assumptions}
We formulate the load-balancing problem under the following assumptions: 
\begin{itemize}
    \item Each LB estimates the network latencies $l_{k,m}^n(t)$ towards the available instances through lightweight periodic probes whose overhead is negligible compared to the service traffic.
    \item Client-to-LB latencies are assumed to be negligible and absorbed into $\ell_{k,m}(t)$, reflecting typical edge deployments where ingress gateways are located in close proximity to end devices.
    \item Processing latency $l_m^p(t)$ increases consistently with $\Lambda_m(t)$.
    \item Each LB observes only its own latency samples $\ell_{k,m}(t)$ and rewards $r_{k,m}(t)$, without access to other LBs’ weights or observations. Coordinated decision-making would require each LB to obtain performance or state information from others, which would introduce significant communication and monitoring overhead. This is particularly impractical in large-scale CC deployments with thousands of geographically distributed nodes.
    \item LBs are aware of instance placement changes, i.e., when new instances are deployed or existing ones are removed. To limit the monitoring overhead, each LB receives such updates only for instances hosted on nodes within its potential QoS reach (nodes that could satisfy the latency threshold~$\tau$ under current network conditions).
    \item The system is well-provisioned: there exists a centralized oracle allocation of weights that satisfies the QoS threshold $\tau$ for all clients.
\end{itemize}

\subsection{Objective}

Each LB~$k$ must learn a decentralized policy~$\pi_k$ that adaptively updates its weight vector~$w_k(t)$, without coordination with other LBs, so that, for every time~$t$,
\begin{equation}
\label{eq:objective}
\frac{1}{|\mathcal{R}_k(t;W)|}\sum_{j \in \mathcal{R}_k(t;W)} r_{k,j} \;\ge\; \rho,
\end{equation}
where $\mathcal{R}_k(t;W)$ is the set of requests handled by load balancer~$k$ in the window~$[t{-}W,\,t)$, and $r_{k,j}$ indicates whether request~$j$ met the latency target~$\tau$.  
The success indicator~$r_{k,j}$ is determined by the instance selected for that request,
\begin{equation}
r_{k,j} = r_{k,m}(t) \quad \text{where} \quad m = a_{k,j},
\end{equation}
with $a_{k,j}$ denoting the instance chosen by LB~$k$ for request~$j$.

\subsection{Key Properties}

\begin{itemize}
    \item \textbf{Players:} $K$ load balancers, one per node ($k \in \mathcal{K}$).
    \item \textbf{Arms:} Active service instances $m \in \mathcal{M}(t)$.
    \item \textbf{Adaptive weights:} Each policy outputs a probability distribution over arms, which can change during runtime.
    \item \textbf{Heterogeneous rewards:} Rewards are heterogeneous due to per-LB network latency $l_{k,m}^n$.
    \item \textbf{Implicit collisions:} When multiple LBs select the same instance, congestion increases its processing latency $l_m^p(t)$, potentially reducing rewards.
    \item \textbf{QoS-driven:} Each LB maximizes the probability of satisfying QoS requirements $(\rho,\tau, W)$ rather than minimizing overall mean latency.
    \item \textbf{Decentralized:} 
    LBs operate independently, without sharing information or explicit coordination. 
\end{itemize}

\subsection{Regret Definition}

Regret is the standard performance measure in MAB problems. It quantifies the cumulative loss incurred by a learning algorithm compared to an oracle that always selects the optimal action with full knowledge of the underlying reward distributions \cite{slivkins2019}. A low (sublinear) regret indicates that the learning policy efficiently balances exploration and exploitation, ultimately achieving performance comparable to that of the oracle. In the context of this work, each LB acts as a learning agent that adaptively adjusts its routing weights to maximize QoS success. Therefore, regret represents the cumulative QoS performance gap between the decentralized routing policy and an oracle weight allocation that knows the success probabilities of all service instances.

Let:
\begin{itemize}
    \item $\mu_{k,m}(t) = \mathbb{P}(\ell_{k,m}(t) \le \tau)$ denote the expected QoS success probability for LB $k$ on instance $m$ at time $t$,
    \item $w_k^*(t) \;=\; \arg\max_{w} \;\sum_{m \in \mathcal{M}(t)} w_m\, \mu_{k,m}(t)$ be the oracle weight vector at time $t$,
    \item $w_k(t)$ be the learned weight vector at time $t$.
\end{itemize}

Then, the cumulative regret of LB $k$ over $T$ rounds is
\begin{equation}
\begin{split}
R_k(T) = \sum_{t=1}^{T} \Bigg( 
\sum_{m \in \mathcal{M}(t)} w_{k,m}^*(t)\,\mu_{k,m}(t)
\;-\; \\
\sum_{m \in \mathcal{M}(t)} w_{k,m}(t)\,\mu_{k,m}(t)
\Bigg),
\end{split}
\end{equation}
and the total system regret is
\begin{equation}
\label{eq:overall-regret}
R(T) = \sum_{k \in \mathcal{K}} R_k(T).
\end{equation}

\section{Decentralized Algorithms for QoS-Aware Load Balancing in CC}
\label{section:algorithms}

We propose \textbf{a set of decentralized algorithms for QoS-aware load balancing in the CC}. Each LB operates as an independent learner, continuously adapting its routing weights toward the set of service instances that can meet its clients’ QoS requirements. The algorithms are designed to be fully decentralized and lightweight, relying only on local latency measurements and limited placement updates from the orchestrator. This enables scalability across large CC deployments with thousands of nodes and dynamic instance placement.

Each LB operates in continuous time indexed by discrete units $t = 1, 2, 3, \dots$. The adaptation process is organized into discrete update intervals indexed by $d = 1, 2, 3, \dots$, with each step spanning a fixed horizon of $H_d$ time units.  At the end of each decision step, an LB updates its estimates of instance QoS performance, recalculates routing weights, and adjusts its exploration budget. LBs receive placement updates (i.e., instance additions or removals) only for nodes within their potential QoS reach (nodes that can satisfy the latency threshold~$\tau$) to minimize monitoring and communication overhead.

\subsection{QoS Pool Maintenance}

Algorithm~\ref{alg:qos-pool-weights} describes the main adaptive control loop executed by each LB, which maintains the service QoS pools. At the beginning of every decision step, the LB updates its local view of the available instances by applying placement updates received from the orchestrator (line \ref{line:feasible}). The feasible set $\mathcal{F}_k(t)$ includes all instances within the latency bound~$\tau$ that are currently not in cooldown (more on cooldown in  Algorithm~\ref{alg:routing-weights}). Each instance $m \in \mathcal{F}_k(t)$ is then evaluated using empirical QoS estimates $\hat{\mu}_{k,m}(t)$ obtained with Kernel Density Estimation (KDE),\footnote{Kernel Density Estimation (KDE) is a non-parametric technique for estimating the probability density function of an unknown distribution from observed samples. Rather than producing a blocky histogram, KDE places smooth kernel functions at each data point and aggregates them to form a continuous density estimate~\cite{kde}.} computed over a sliding observation window of $W$ time units, representing the probability that latency $\ell_{k,m}(t)$ remains below the target~$\tau$ (line \ref{line:estimate}).

Instances with an estimated QoS success probability~$\hat{\mu}_{k,m}(t)$ exceeding the target success ratio~$\rho$ are assigned to the \emph{exploitation pool}~$\mathcal{E}_k(t)$, while the remaining feasible instances from $\mathcal{F}_k(t)$ form the \emph{exploration pool}~$\mathcal{X}_k(t)$. The overall QoS pool $\mathcal{Q}_k(t)$ is defined as their union (line~\ref{line:qos-pool}). 

Routing weights~$w_{k,m}(t)$ are computed proportionally to QoS scores within each pool. The total weight is divided between the exploration and exploitation pools according to the exploration rate~$\varepsilon(t)$ (line~\ref{line:budgets}). Then, per-pool scores are derived: exploitation scores increase with the excess of~$\hat{\mu}_{k,m}(t)$ over the QoS target~$\rho$, while exploration scores increase as~$\hat{\mu}_{k,m}(t)$ approaches~$\rho$ (line~\ref{line:scores}). To prevent zero-valued scores and ensure stable normalization, a small smoothing factor~$\eta$ is added to each score before normalization.
These scores are normalized within their respective pools, producing the final routing weights~$w_k(t)$ (line~\ref{line:final-weights}), with instances outside the QoS pool assigned zero weight.
As the measured QoS success ratio $\hat{\rho}_k(t)$ stabilizes, the exploration rate $\varepsilon(t)$ decays (line \ref{line:decay-epsilon}); if degradation is detected, exploration is reset to adapt to the current system state (line \ref{line:reset-epsilon}).

\begin{algorithm}[]
\caption{QoS pool maintenance}
\label{alg:qos-pool-weights}
\begin{algorithmic}[1]
\REQUIRE Threshold $\tau$, QoS target $\rho$, window $W$, $\varepsilon$-decay factor $\gamma$, smoothing factor $\eta>0$
\STATE \textbf{Initialization at $t{=}0$:}
\STATE Receive $\mathcal{M}(0)$; define feasible set by QoS reach:
\[
\mathcal{F}_k(0) \gets \{\, m \in \mathcal{M}(0) \mid l_{k,m}^n \le \tau\}
\] \label{line:feasible} \vspace{-0.5cm}
\STATE Set exploitation pool $\mathcal{E}_k(0) \gets \emptyset$ ,\quad exploration pool $\mathcal{X}_k(0) \gets \mathcal{F}_k(0)$
\STATE Define QoS pool $\mathcal{Q}_k(0) \gets \mathcal{E}_k(0) \cup \mathcal{X}_k(0)$
\STATE Set exploration budget $\varepsilon(0) \gets 1 - \rho$; initialize uniform $w_{k,m}(0)$ over $\mathcal{Q}_k(0)$

\FOR{each decision step $d = 1, 2, \dots$ (where $t = d \times H_d$)}
    \STATE Refresh: network latencies $l_{k,m}^n$, instances $\mathcal{M}(t)$, and cooldown states $T^{\mathrm{cd}}_m$
    \STATE $l^{p\star} \gets$ best expected processing latency
    \STATE Update feasible set:
    \[
    \mathcal{F}_k(t) \gets 
    \{\, m \in \mathcal{M}(t) 
    \mid l_{k,m}^n{+}l^{p\star} \le \tau \ \wedge\ 
    t \ge T^{\mathrm{cd}}_m \,\}
    \]
    \STATE Set $\mathcal{E}_k(t) \gets \emptyset$, $\mathcal{X}_k(t) \gets \emptyset$
    \FOR{each $m \in \mathcal{F}_k(t)$}
        \STATE Estimate QoS success over the last window $W$:
        \[
        \hat{\mu}_{k,m}(t) \gets \mathbb{P}\!\big(\ell_{k,m}(t') \le \tau \ \big|\ t' \in [t{-}W, t)\big)
        \] \label{line:estimate} \vspace{-0.5cm}
        \IF{$\hat{\mu}_{k,m}(t) \ge \rho$} \STATE add $m$ to $\mathcal{E}_k(t)$
        \ELSE \STATE add $m$ to $\mathcal{X}_k(t)$
        \ENDIF
    \ENDFOR
    \STATE Define QoS pool $\mathcal{Q}_k(t) \gets \mathcal{E}_k(t) \cup \mathcal{X}_k(t)$ \label{line:qos-pool}
    \STATE Allocate pool budgets: 
        \[W^{\mathrm{E}}(t) \gets 1-\varepsilon(t), W^{\mathrm{X}}(t) \gets \varepsilon(t)\] \label{line:budgets} \vspace{-0.5cm}
    \STATE Compute per-pool scores (with floor $\eta$ to avoid zeros):
    \[
    s^{\mathrm{E}}_{k,m}(t) = (\hat{\mu}_{k,m}(t)-\rho) + \eta \;\; (m\in\mathcal{E}_k(t)), 
    \]
    \[
    s^{\mathrm{X}}_{k,m}(t) = \hat{\mu}_{k,m}(t) + \eta \;\; (m\in\mathcal{X}_k(t))
    \] \label{line:scores} \vspace{-0.5cm}
    \STATE Normalize scores within pool budgets:
    \[
    w^{\mathrm{E}}_{k,m}(t) = 
      W^{\mathrm{E}}(t)\,\frac{s^{\mathrm{E}}_{k,m}(t)}{\sum_{j \in \mathcal{E}_k(t)} s^{\mathrm{E}}_{k,j}(t)} ,
    \]
    \[ 
    w^{\mathrm{X}}_{k,m}(t) = 
      W^{\mathrm{X}}(t)\,\frac{s^{\mathrm{X}}_{k,m}(t)}{\sum_{j \in \mathcal{X}_k(t)} s^{\mathrm{X}}_{k,j}(t)}
    \] \vspace{-0.3cm}
    \STATE Form final weights:
    \[
    w_{k,m}(t) \gets 
      \begin{cases}
        w^{\mathrm{E}}_{k,m}(t) + w^{\mathrm{X}}_{k,m}(t), & m \in \mathcal{Q}_k(t),\\
        0, & m \notin \mathcal{Q}_k(t)
      \end{cases}
    \] \label{line:final-weights}
    \IF{$\hat{\rho}_k(t) < \hat{\rho}_k(t-W)$} 
        \STATE reset $\varepsilon(t{+}1) \gets 1 - \rho$ \label{line:reset-epsilon}
    \ELSE
        \STATE decay $\varepsilon(t{+}1) \gets \varepsilon(t)\cdot \gamma$ \label{line:decay-epsilon}
    \ENDIF
\ENDFOR
\end{algorithmic}
\end{algorithm}

This mechanism allows each LB to continuously rebalance weights toward instances with higher QoS success probability while preserving a nonzero exploration rate that enables discovery of newly added or performance-improving instances. The computation is local, lightweight, and scales well with the number of reachable instances.

\subsection{Request Routing}

Algorithm~\ref{alg:routing-weights} defines the per-request routing process that operates continuously in parallel with weight maintenance. Upon receiving a request, each LB selects a target instance $m$ from its current QoS pool $\mathcal{Q}_k(t)$ using the Smooth Weighted Round Robin (SWRR)\footnote{SWRR is an implementation of the round robin mechanism, introduced by the NGINX HTTP load balancer \cite{nginx} and designed to smooth the distribution of requests over time, preventing high-weight targets from receiving large bursts of requests while others wait excessively to receive new requests.} policy based on the weight vector $w_k(t)$. The request is routed according to the weights, while the observed latency $\ell_{k,m}(t)$ and QoS success $r_{k,m}(t)$ are stored as feedback for the next decision window (line~\ref{line:route-req}). Instances that repeatedly violate the latency threshold~$\tau$ exceeding a predefined error count $E_t$ are temporarily placed in a cooldown state for a fixed duration $\Delta_{\mathrm{cd}}$ and removed from the QoS pool, ensuring that degraded instances do not continue to receive traffic (line~\ref{line:set-cooldown}). This continuous feedback loop allows each LB to adapt routing decisions dynamically to current network and service conditions without any inter-LB coordination or shared global state.

\begin{algorithm}[]
\caption{Request routing using weights}
\label{alg:routing-weights}
\begin{algorithmic}[1]
\REQUIRE Weights $w_{k,m}(t)$, QoS threshold $\tau$, error count threshold $E_t$, cooldown duration $\Delta_{\mathrm{cd}}$
\FOR{each request $r = 1, 2, \dots$}
    \STATE $t \gets$ current time
    \STATE Select instance $m \in \mathcal{Q}_k(t)$ using SWRR with $w_k(t)$
    \STATE Route request; record latency $\ell_{k,m}(t)$ and reward $r_{k,m}(t)=\mathbf{1}\{\ell_{k,m}(t)\le \tau\}$ \label{line:route-req}
    \STATE Update $e_{k,m} \leftarrow 0$ if $r_{k,m}(t)=1$ else $e_{k,m} \leftarrow e_{k,m}+1$
    \IF{$e_{k,m} \ge E_t$}
        \STATE Set $T^{\mathrm{cd}}_m \leftarrow t + \Delta_{\mathrm{cd}}$; remove $m$ from $\mathcal{Q}_k(t)$; renormalize $w_{k}(t)$ \label{line:set-cooldown}
        \STATE Set $e_{k,m} \leftarrow 0$
    \ENDIF
\ENDFOR
\end{algorithmic}
\end{algorithm}

\subsection{Handling Instance Additions}

Algorithm~\ref{alg:handle-added} describes the procedure executed when a new service instance is introduced in the CC. Upon receiving an event on addition, the LB computes an estimate of the best expected processing latency $l^{p\star}$ based on the $\rho$-percentile of recent processing latency distributions across existing instances of the same service. If the sum of network latency $l_{k,m_{\text{new}}}^n$ and the expected processing latency $l^{p\star}$ does not exceed the QoS threshold~$\tau$, the instance is added to the QoS pool $\mathcal{Q}_k(t)$ with an initial weight $w_{k,m_{\text{new}}}(t)=0$. Upon the next QoS pool maintenance step, this instance will get the best score among the exploration instances because there will be no previous latency records. This ensures that newly added instances are discovered gradually through controlled exploration, avoiding overloading before sufficient feedback is collected.

\begin{algorithm}[]
\caption{Handling instance added}
\label{alg:handle-added}
\begin{algorithmic}[1]
\REQUIRE New instance $m_{\text{new}}$;
\STATE $l^{p\star}\gets$ best expected latency
\IF{$l_{k,m_{\text{new}}}^n+l^{p\star}\le \tau$}
    \STATE \hspace{1em} Add $m_{\text{new}}$ to $\mathcal{Q}_k(t)$; init. weight $w_{k,m_{\text{new}}}(t)\gets 0$
\ENDIF
\end{algorithmic}
\end{algorithm}

\subsection{Handling Instance Removals}

Algorithm~\ref{alg:handle-removed} defines how LBs react to instance removals. When the orchestrator removes an instance or the instance fails, each LB purges its local latency, processing, and cooldown data for that instance, and removes its weight. If the instance was present in the QoS pool $\mathcal{Q}_k(t)$, it is removed and the remaining weights are renormalized to maintain a valid probability distribution over the remaining instances. 

\begin{algorithm}[]
\caption{Handling instance removed}
\label{alg:handle-removed}
\begin{algorithmic}[1]
\REQUIRE Removed instance $m_{\text{rem}}$; QoS pool $\mathcal{Q}_k(t)$ with weights $w_k(t)$
\STATE Remove $w_{k,m}$ from $w_k(t)$
\IF{$m_{\text{rem}} \in \mathcal{Q}_k(t)$}
    \STATE Remove $m_{\text{rem}}$ from $\mathcal{Q}_k(t)$
    \STATE Renormalize remaining weights $w_k(t)$
\ENDIF
\end{algorithmic}
\end{algorithm}

\subsection{Regret Upper Bound}

A regret upper bound represents a theoretical limit on how much cumulative performance loss the learning policy can incur relative to an optimal oracle \cite{slivkins2019}. If the regret grows sublinearly, it means that the average regret per decision $\tfrac{R(T)}{T}$ vanishes as $T$ increases, indicating that the policy asymptotically learns to act optimally. In the context of the CC, this corresponds to load balancers progressively converging toward optimal routing decisions with negligible long-term QoS degradation.

Since the proposed algorithms continuously adapt to time-varying network and load conditions, the CC environment is \emph{non-stationary}: the success probability $\mu_{k,m}(t)$ of each instance evolves over time due to changing latencies, workloads, or instance placements.  
To model this drift formally, we adopt the variation-budget framework proposed by Besbes \textit{et al.}~\cite{besbes2014}.

\begin{definition}[Variation budget per load balancer]
\label{def:variation}
For each load balancer~$k$, define the total variation of instance means over horizon~$T$ as
\begin{equation}
V_k(T)
=\sum_{t=1}^{T-1}
\max_{m\in\mathcal{M}(t)\cap\mathcal{M}(t{+}1)}
|\mu_{k,m}(t{+}1)-\mu_{k,m}(t)|.
\end{equation}
\end{definition}

The value $V_k(T)$ measures the cumulative drift of instance performance seen by load balancer~$k$; a smaller $V_k(T)$ implies a more stable environment.

\begin{theorem}[Regret bound per LB]
\label{thm:nonstationary}
Given the bounded variation from Definition~\ref{def:variation}, the dynamic regret of load balancer~$k$ satisfies
\begin{equation}
\mathbb{E}[R_k(T)]
=\tilde{O}\!\big((M\,V_k(T))^{1/3}\,T^{2/3}\big)
\end{equation}
matching the minimax-optimal upper bound for stochastic multi-armed bandits 
with non-stationary rewards established by~\cite[Thm.~2]{besbes2014}.
\end{theorem}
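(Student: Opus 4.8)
The plan is to prove the bound by reduction: show that, from the vantage point of a single load balancer~$k$, the decentralized process collapses to a single-player \emph{non-stationary stochastic} $M$-armed bandit, and then invoke the achievability result of Besbes \emph{et al.}~\cite[Thm.~2]{besbes2014}. First I would simplify the benchmark. Since $\sum_{m} w_m\,\mu_{k,m}(t)$ is linear in~$w$ over the simplex, the oracle vector $w_k^*(t)$ is a point mass on $\arg\max_m \mu_{k,m}(t)$, so $\sum_m w^*_{k,m}(t)\mu_{k,m}(t)=\max_m \mu_{k,m}(t)$; symmetrically, $\sum_m w_{k,m}(t)\mu_{k,m}(t)$ is the expected per-round reward of the randomized action with law~$w_k(t)$, which the SWRR scheduler realizes up to a deterministic per-instance discrepancy of~$O(1)$ that contributes only a lower-order additive term to $R_k(T)$. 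Hence $R_k(T)$ is exactly the dynamic (tracking) regret against the best arm at each round. The influence of the other $K-1$ load balancers enters $k$'s problem only through their congestion-driven contribution to $l_m^p(t)$, hence only through the time-variation of $\mu_{k,m}(t)$; by Definition~\ref{def:variation} this variation, together with the drift from network and placement changes, is bounded by~$V_k(T)$. The well-provisioning assumption guarantees the benchmark is attainable, and the probe/sampling assumptions guarantee that the windowed latency samples feeding the KDE are conditionally unbiased draws from the current $\mu_{k,m}(t)$, so the reduction preserves both the loss and the variation budget.

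Next I would match Algorithm~\ref{alg:qos-pool-weights} to the batched-restart template (\textsc{Rexp3}) of Besbes \emph{et al.} Partition $\{1,\dots,T\}$ into consecutive batches of length~$\Delta_T$. Within a batch, the forced exploration mass~$\varepsilon(t)$ (never below the $\eta$-floor on any active instance) plays the role of the uniform-exploration component that keeps every active instance's mean estimated to the required accuracy; the sliding-window KDE estimate $\hat\mu_{k,m}$ over the last~$W$ units, combined with the proportional/softmax-like weighting of lines~\ref{line:scores}--\ref{line:final-weights}, plays the role of the exponential-weights update; and the reset of line~\ref{line:reset-epsilon} ($\varepsilon$ restored to $1-\rho$ on a drop of $\mathrm{QoS}_c$) plays the role of the periodic restart that discards stale statistics. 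Newly added or cooled-down instances enter through the cooldown bookkeeping of Algorithms~\ref{alg:routing-weights}--\ref{alg:handle-added} and the ``best exploration score to instances with no history'' rule, so the effective arm set is always $\mathcal{M}(t)$ with $|\mathcal{M}(t)|\le M$. The standard two-term decomposition then applies: the within-batch regret against the best \emph{fixed} instance of that batch is $\tilde O(\sqrt{M\,\Delta_T})$ by the stationary guarantee of the exploration-weighted update, and the regret of the best-fixed-per-batch policy against the round-wise optimum is at most $2\Delta_T V_k^{(j)}$ in batch~$j$, where $V_k^{(j)}$ is the variation restricted to batch~$j$ and $\sum_j V_k^{(j)}\le V_k(T)$. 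Summing over the $\lceil T/\Delta_T\rceil$ batches gives $\mathbb{E}[R_k(T)] = \tilde O\!\big(T\sqrt{M/\Delta_T}\big) + \tilde O\!\big(\Delta_T V_k(T)\big)$; choosing $\Delta_T \asymp T^{2/3} M^{1/3} V_k(T)^{-2/3}$ (as in Besbes \emph{et al.}, using a known upper bound on $V_k(T)$, which here the reset frequency can also supply adaptively) balances the two terms at $\tilde O\!\big((M\,V_k(T))^{1/3}T^{2/3}\big)$; this is the stated rate, and it coincides in order with the $\Omega\!\big((M\,V_k(T))^{1/3}T^{2/3}\big)$ lower bound of~\cite[Thm.~1]{besbes2014}, which is the sense in which it is minimax-optimal.

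The main obstacle — and where the argument needs the most care — is the second step: certifying rigorously that the \emph{concrete} mechanisms of Algorithm~\ref{alg:qos-pool-weights} instantiate the \textsc{Rexp3} template with the right parameters. Three points require work. (i) The restart here is \emph{data-dependent} (triggered by a drop in $\mathrm{QoS}_c$) rather than scheduled, so one must show it fires at least once per $\Theta(\Delta_T)$ rounds when drift accumulates and not much more often otherwise — a change-detection argument that ties the trigger's sensitivity to the window~$W$ and decay~$\gamma$. (ii) The estimator is a windowed KDE, not an empirical mean, so the within-batch guarantee needs a bias--variance bound for KDE under $\Theta(\varepsilon\lambda_k\Delta_T)$ effective samples, ensuring the $\tilde O(\sqrt{M\Delta_T})$ term survives. (iii) SWRR is a deterministic frequency-matcher rather than i.i.d.\ sampling, so one must verify it keeps each instance's served count within $O(1)$ of its ideal fractional share, so the cumulative reward deviation is $O(M)$, lower order at the balanced~$\Delta_T$. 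Accounting for the time-varying arm set $\mathcal{M}(t)$ — additions and removals charged against $V_k(T)$ via the $\max$ over $\mathcal{M}(t)\cap\mathcal{M}(t{+}1)$ in Definition~\ref{def:variation}, plus an $O(1)$ discovery cost per addition absorbed by forced exploration — is routine but necessary bookkeeping.
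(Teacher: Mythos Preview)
Your proposal is correct and follows essentially the same approach as the paper: reduce the per-LB problem to a single-player non-stationary stochastic bandit with variation budget~$V_k(T)$, argue that the sliding-window estimation plus the $\varepsilon$-reset mechanism instantiates the phased/resetting template of Besbes \emph{et~al.}, and invoke their Theorem~2. The paper's own proof is a two-paragraph sketch that stops exactly where you begin flagging the three technical obstacles (data-dependent restarts, KDE bias--variance, SWRR determinism) and does not carry out the batch decomposition or the $\Delta_T$ balancing explicitly, so your treatment is strictly more detailed than what the paper itself provides.
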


\begin{proof}
We show that the routing problem at each load balancer $k$ can be cast as a 
stochastic multi-armed bandit with non-stationary rewards in the sense of 
Besbes et al.~\cite{besbes2014}. This allows us to directly apply their 
minimax regret bound for the variation-limited setting.

\textbf{1) Mapping our model to the non-stationary MAB framework.}
Let $\mu_{k,m}(t)$ denote the expected QoS success probability when LB~$k$ routes a request to instance $m$ at time $t$. Since network delay and processing latency vary with congestion and routing conditions, the sequence $\{\mu_{k,m}(t)\}_{t=1}^T$ is generally non-stationary. 
Definition~\ref{def:variation} introduces the total variation budget $V_k(T)$ in exactly the same form as the variation measure used in \cite{besbes2014}, thereby placing our routing problem within the standard variation-limited stochastic MAB framework.

\textbf{2) Our algorithm as a phased/resetting strategy.}
Besbes et al.\ prove that, under bounded variation $V_k(T)$, a near-minimax 
optimal regret of
$\tilde{O}\!\big((M\,V_k(T))^{1/3}T^{2/3}\big)$
is achievable by algorithms combining (i) sliding-window empirical estimation and 
(ii) periodic or event-triggered resets that discard stale observations.
Our decentralized routing algorithm satisfies these conditions:
\begin{itemize}
    \item Sliding-window KDE estimation ensures that mean rewards depend only on recent QoS observations.
    \item The adaptive $\varepsilon$-reset mechanism (triggered by QoS degradation) serves as a data-driven resetting rule, functionally equivalent to the phased/resetting strategies analyzed in~\cite{besbes2014}.
\end{itemize}
\end{proof}

\begin{corollary}[System-wide regret]
\label{cor:system_regret}
Using the definition in Eq.~\ref{eq:overall-regret}, we derive the system-wide regret
\begin{equation}
\mathbb{E}[R(T)]
=\sum_{k=1}^{K}\mathbb{E}[R_k(T)]
=\tilde{O}\!\left(
T^{2/3}\!\sum_{k=1}^{K}(M\,V_k(T))^{1/3}
\right).
\end{equation}
\end{corollary}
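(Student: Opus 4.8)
The plan is to obtain this corollary as an almost immediate consequence of Theorem~\ref{thm:nonstationary} together with linearity of expectation; the only care required is in the behaviour of the $\tilde O(\cdot)$ notation under a finite sum over load balancers.

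First I would start from the definition of the system-wide regret in Eq.~\ref{eq:overall-regret}, which is the pathwise identity $R(T)=\sum_{k\in\mathcal{K}}R_k(T)$. Taking expectations of both sides and using linearity of expectation yields $\mathbb{E}[R(T)]=\sum_{k=1}^{K}\mathbb{E}[R_k(T)]$; no independence of the $R_k$ is needed here, which matters because the per-LB regrets are in fact coupled through the implicit collisions that inflate processing latency. That coupling does not have to be revisited at this stage: it has already been absorbed into the per-LB analysis, since the drift it induces in $\mu_{k,m}(t)$ is exactly what the variation budget $V_k(T)$ of Definition~\ref{def:variation} is built to measure.

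Next I would apply Theorem~\ref{thm:nonstationary} to each summand, $\mathbb{E}[R_k(T)]=\tilde O\!\big((M\,V_k(T))^{1/3}T^{2/3}\big)$. The one point that needs to be made explicit is that the constant and the polylogarithmic factors hidden in this $\tilde O$ can be chosen uniformly in $k$: by construction they depend only on the number of arms $M$ and the horizon $T$ (shared by all LBs) and on $V_k(T)$, which already appears explicitly, so there is a single function $C(M,T)$, polylogarithmic in $T$, with $\mathbb{E}[R_k(T)]\le C(M,T)\,(M\,V_k(T))^{1/3}T^{2/3}$ for every $k$. Summing this over the $K$ load balancers and pulling the common factor $C(M,T)\,T^{2/3}$ out of the sum gives $\mathbb{E}[R(T)]\le C(M,T)\,T^{2/3}\sum_{k=1}^{K}(M\,V_k(T))^{1/3}$, which is precisely the claimed $\tilde O\!\big(T^{2/3}\sum_{k=1}^{K}(M\,V_k(T))^{1/3}\big)$; since $K$ is a fixed system parameter, a finite sum of uniform $\tilde O$ terms is itself $\tilde O$ of the sum.

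The main obstacle is therefore bookkeeping of the asymptotic notation rather than analytical depth: one must confirm that the hidden constants are genuinely $k$-independent before commuting the summation with $\tilde O$. As an optional tightening I would note that, by concavity of $x\mapsto x^{1/3}$ and the power-mean inequality, $\sum_{k=1}^{K}(M\,V_k(T))^{1/3}\le K^{2/3}\big(M\sum_{k}V_k(T)\big)^{1/3}$, so that with an aggregate variation budget $V(T)=\sum_{k}V_k(T)$ one obtains the more compact form $\mathbb{E}[R(T)]=\tilde O\!\big((K^{2}M\,V(T))^{1/3}T^{2/3}\big)$, which remains sublinear in $T$ whenever $V(T)=o(T)$.
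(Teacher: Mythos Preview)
Your proposal is correct and follows exactly the approach implicit in the paper, which states the corollary without a separate proof and treats it as an immediate consequence of Eq.~\ref{eq:overall-regret} together with Theorem~\ref{thm:nonstationary} via linearity of expectation. Your explicit discussion of the $k$-uniformity of the hidden constants, and the optional power-mean tightening to $\tilde O\!\big((K^{2}M\,V(T))^{1/3}T^{2/3}\big)$, go beyond what the paper provides but are sound additions rather than a different route.
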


The distributed system thus maintains sublinear cumulative QoS loss as long as $V_k(T)=o(T)$ for all load balancers.

\begin{remark}
The regret bound has several important implications:
\begin{itemize}
    \item[(i)] \textbf{Stationary case:} When $V_k(T)=0$, the environment is stationary, i.e. network latencies and client loads remain constant, and the bound is determined by $\tilde{O}(\log T)$, as proven by Auer \textit{et al.}~\cite{auer2002} for the \emph{$\varepsilon$-decay} policy.
    \item[(ii)] \textbf{Sublinearity:} 
    If $V_k(T)=o(T)$, the cumulative drift of instance performance grows slower than linearly with time, resulting in sublinear regret. This situation is typical for practical CC environments, where operating conditions evolve gradually: network latencies change smoothly due to stable routing and link characteristics; client load and processing delays fluctuate continuously but without sudden jumps; and instance placement changes occur on a much slower timescale than request arrivals. Such moderate non-stationarity allows each load balancer to track the optimal routing strategy effectively. 
    On the other hand, when $V_k(T)=\Theta(T)$, the environment changes as quickly as it is observed, making it impossible for any learning-based policy to adapt, and resulting in linear regret. For example, this can happen in highly dynamic mobile edge networks where network paths or client loads vary unpredictably at every step.
    \item[(iii)] \textbf{Estimator impact:} The KDE-based latency estimator improves the stability of empirical QoS estimates by smoothing latency observations, which may affect constant and logarithmic factors in the regret bound, but it does not change the overall growth rate of regret.
\end{itemize}
\end{remark}

\subsection{Computational Complexity and Communication Overhead}
\label{section:complexity}

Each load balancer executes its weight update process independently using only local latency and reward observations, without coordination with other nodes. The computational complexity per decision step is $O(|\mathcal{Q}_k(t)|)$, where $\mathcal{Q}_k(t)$ is the set of instances that could satisfy the QoS requirements, i.e. the QoS pool. All required operations—QoS estimation, pool updates, and weight normalization—scale linearly with the number of candidate instances in the QoS pool.

Communication overhead is minimal, as LBs receive placement and health updates only for nodes within their potential QoS reach. Thus, the proposed algorithm remains lightweight and scalable, 
achieving fully decentralized operation even in large-scale CC environments.

\section{Kubernetes-Native Implementation}
\label{section:impl-qedgeproxy}

This section presents the Kubernetes-native implementation of QEdgeProxy. QEdgeProxy maintains per-service QoS pools and adaptive routing weights, performs request-level scheduling using Smooth Weighted Round Robin (SWRR), and reacts to instance additions, removals, and QoS degradation through an event-driven cooldown mechanism. Since built-in Kubernetes mechanisms (such as NodePort, NetworkPolicy locality options, or service meshes like Istio) do not support adaptive or QoS-driven routing, we deploy QEdgeProxy as a \textit{DaemonSet}, ensuring that every node in the cluster runs its own instance.

\subsection{Deployment as a DaemonSet}

QEdgeProxy is deployed like any other Kubernetes workload, as illustrated in Fig.~\ref{fig:ep-k8s}. Implemented as a lightweight Golang HTTP server,\footnote{Source code: \url{https://github.com/AIoTwin/qedgeproxy/tree/tsc-paper}} each LB connects to the Kubernetes API server to discover the nodes and service instances in the CC. 
To remain aware of runtime changes, each LB subscribes to node- and service-related events. To support this, we extend the standard Kubernetes control plane components with a custom module called the \textit{MQTT Informer}. Implemented in Golang,\footnote{Source code: \url{https://github.com/AIoTwin/k8s-mqtt-informer}} this extension provides event-driven notifications for relevant cluster changes. The MQTT Informer enables lightweight MQTT-based subscriptions to events such as instance creation, removal, or node status updates for specific services. This mechanism directly supports the assumption introduced in Section~\ref{section:problem} that load balancers are informed of instance placement changes but only for nodes within their potential QoS reach.
Clients access services by sending HTTP requests to their assigned QEdgeProxy, with the service name included in the request header. Once an instance is selected, the LB forwards the request directly to the corresponding pod via \textit{kube-proxy}.

To keep the network information up to date, each LB periodically performs lightweight latency measurements by pinging other QEdgeProxy instances. These measured latencies are stored locally and incorporated into QoS pool maintenance.

\begin{figure}[ht]
\centering
\includegraphics[width=0.95\linewidth]{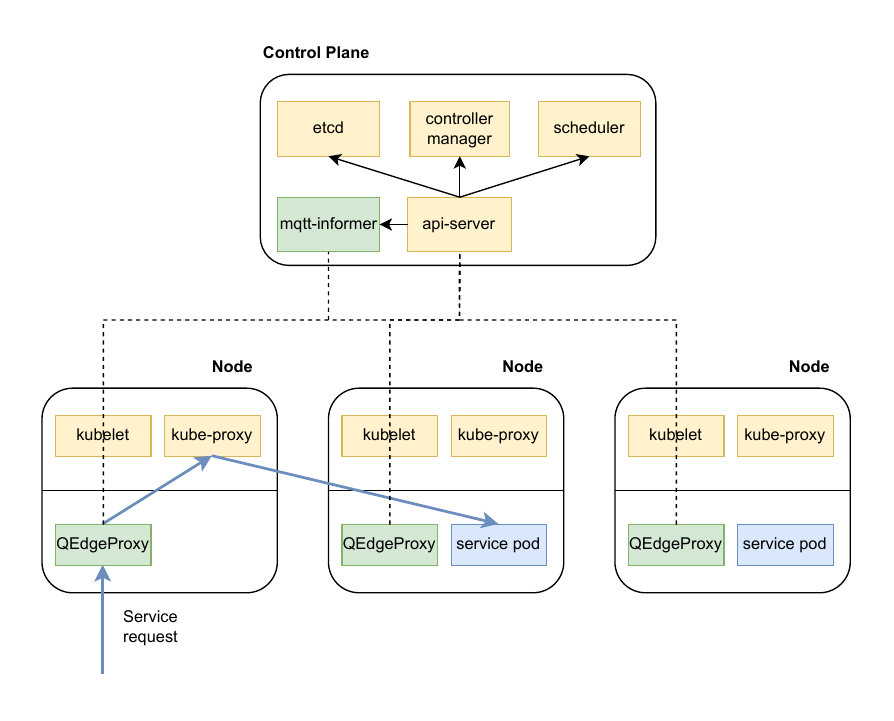}
\caption{QEdgeProxy within a Kubernetes environment.}
\label{fig:ep-k8s}
\end{figure}

\subsection{Runtime Adaptation and Event Handling}

Each QEdgeProxy implements the decentralized algorithms from Section~\ref{section:algorithms} in an event-driven module. For every service~$S$, it maintains a QoS pool~$\mathcal{Q}_k(t)$ containing reachable and healthy instances, along with the adaptive routing weights $w_k(t)$. These weights are continuously updated in the background using observations collected within the sliding time window~$W$, combining measured latency and QoS success feedback.

Each QEdgeProxy also maintains per-instance metrics, including latency, success ratio, error count, and cooldown state. A lightweight periodic maintenance routine updates the QoS pool and recalculates routing weights. Newly discovered or recovered instances are added automatically, while degraded instances are temporarily excluded through a cooldown mechanism. Adaptation parameters such as the maintenance interval~$H_d$, exploration decay factor~$\gamma$, smoothing factor~$\eta$, and cooldown duration~$\Delta_{\mathrm{cd}}$ can be configured externally based on the workload characteristics.

\subsection{Integration and Configuration}

Integrating QEdgeProxy into an existing Kubernetes deployment requires minimal configuration. It works natively with standard Kubernetes services and does not require changes to application code or service definitions. To enable QoS-aware routing, a developer or a DevOps engineer only needs to label the target service with its QoS requirements, i.e., the latency threshold~$\tau$, the success ratio~$\rho$, and the evaluation window $W$. Once an instance of QEdgeProxy is deployed, it automatically discovers the labeled services and their instances (pods), initializes the QoS pools, and continuously performs adaptive request routing according to the defined QoS requirements. No additional configuration files or scaling is required: the proxy autonomously performs monitoring, learning, and routing in the background.

\section{Evaluation}
\label{section:evaluation}

This section evaluates the proposed \textit{QEdgeProxy} load balancing mechanism and its Kubernetes-native implementation using a latency-sensitive IoT edge-AI workload deployed on an emulated CC environment. 

\subsection{Experimental Setup}

Table~\ref{tab:exp-setup} summarizes the configuration used in the evaluation. All experiments were executed on a 10-node \texttt{K3s}\footnote{K3s is a lightweight, resource-efficient Kubernetes distribution designed for edge and IoT environments~\cite{k3s}.} cluster hosting 30 emulated CC nodes with realistic network latencies derived from geographically distributed round-trip time measurements.

\begin{table}[ht]
\centering
\caption{Experimental configuration for evaluating QoS-aware load-balancing strategies.}
\label{tab:exp-setup}
\begin{tabular}{p{0.3\linewidth} p{0.62\linewidth}}
\hline
\textbf{Parameter} & \textbf{Description} \\
\hline
CC Topology & 30 emulated nodes with realistic network simulation \\
Scenarios & Five distinct randomly generated topologies, i.e., five scenarios. \\ 
Load Balancers & 30 (one per node) \\
IoT Service & Quantized \emph{PilotNet} \cite{pilotnet} exposed via HTTP \\
QoS Requirements & 90\% of requests $<$ 80\,ms over a sliding window of 10\,s ($\tau=80$\,ms, $\rho=0.9$, $W=10$\,s) \\
Service Placement & 10 instances deployed on 10 emulated nodes \\
IoT Clients & 120 clients (4 per LB), 10\,req/s each \\
Strategies & \textit{proxy-mity} \cite{proximity}, \textit{Dec-SARSA} \cite{Dec-SARSA}, and \textit{QEdgeProxy} \\
\hline
\end{tabular}
\end{table}

\subsubsection{CC Topology Emulation}
While the underlying infrastructure consisted of 10 nodes, we emulated a larger CC topology with 30 nodes in total via Kubernetes pods. To emulate a resource-constrained CC environment, each emulated node is assigned fixed computational resources through Kubernetes resource limits.\footnote{https://kubernetes.io/docs/concepts/configuration/manage-resources-containers/} Specifically, every node is provisioned with 1\,CPU core and 2\,GB of RAM, approximating the capabilities of a typical single-board computer, such as a Raspberry Pi. 
The network topology was constructed using a latency matrix that captures the pairwise communication delays between nodes and was kept fixed throughout the experiments. Each emulated node was mapped to a city in Europe, and the latencies were drawn from round-trip time (RTT) measurements. Specifically, we relied on the WonderNetwork global ping statistics,\footnote{\url{https://wondernetwork.com/pings}} which provide geographically distributed latency measurements across multiple cities all over the world. Finally, network latencies between nodes were simulated using \textit{Istio VirtualService} \cite{istio}, which applies configurable delays to links between load balancers and service instances using \textit{Fault Injection}.\footnote{\url{https://istio.io/latest/docs/tasks/traffic-management/fault-injection/}}

\subsubsection{IoT Service}
The evaluated service was based on the \emph{PilotNet} model \cite{pilotnet}, a convolutional neural network (CNN) originally introduced by NVIDIA for end-to-end autonomous driving. PilotNet processes raw camera images to predict steering commands, making it a representative workload for latency-critical edge AI applications. In our experiments, we used a quantized version of the model, as in \cite{apj_pilotnet}, to reduce computational load and enable its deployment across multiple resource-constrained CC nodes. The service was containerized and exposed via an HTTP interface to allow clients to send inference requests as HTTP requests. 

The service was configured with QoS requirements stating that \textbf{at least 90\% of all client requests must be processed with an end-to-end latency below 80 milliseconds, evaluated over a 10-second sliding window} ($\tau=80$\,ms, $\rho=0.9, W=10$\,ms). The $80$\,ms latency threshold was chosen to ensure that inference results are returned before the next request (frame) is sent, as clients are sending 10 req/s, thereby maintaining real-time responsiveness. The $90$\% threshold reflects a realistic constraint, as occasional violations are acceptable in many edge-AI systems that can, for instance, rely on a lighter onboard model during temporary periods of degraded QoS.

\subsubsection{Service Placement}
A total of 10 service instances were deployed across 10 distinct emulated nodes, with a service placement strategy ensuring that every node in the CC had at least one service instance within its QoS-defined latency threshold, using the service idle latency (obtained by benchmark) to approximate the QoS reach. The initial placement was computed using a greedy $k$-center algorithm, which iteratively selects, at each step, the node farthest (in network distance) from the currently chosen centers. This approach is commonly used in the literature to address the service placement problem with respect to network latency, such as in~\cite{guo_placement}.

\subsubsection{IoT Clients}
Client workloads were simulated using a lightweight Go program that emulated data streams from on-vehicle cameras. Each simulated client was implemented as an independent \texttt{goroutine} that periodically transmitted an image to the service every 100\,ms, corresponding to a rate of 10 requests per second. A total of 120 clients generating requests were simulated in the experiment, with 4 clients assigned to each load balancer. This configuration ensured a balanced distribution of workload across the system, maintaining sufficient request volume so that the deployed instances could satisfy the QoS requirements for all clients under effective load balancing, while still creating enough load pressure so that suboptimal routing would lead to QoS violations and instance overloads.

\subsubsection{Load-Balancing Strategies}
We compared three distinct load balancing strategies in our experiments. The first strategy, \textit{Proximity-based routing}, implements the \textit{proxy-mity} algorithm proposed by Fahs and Pierre \cite{proximity} (see Section~\ref{section:related}). Two sub-configurations were evaluated: (i) \textit{proxy-mity 1.0}, which prioritizes proximity and therefore minimizes latency (\(\alpha = 1.0\)); and (ii) \textit{proxy-mity 0.9}, which still favors proximity but incorporates a small degree of load balancing (\(\alpha = 0.9\)). These variants allow us to compare our proposed solution both against the configuration that favors the closest instance, and against one that balances latency with load distribution. 

The second strategy, referred to as \textit{Dec-SARSA}, is based on the differential SARSA algorithm for decentralized task scheduling in fog computing, proposed in~\cite{Dec-SARSA} (see Section~\ref{section:related}). In the original formulation, each fog node operates as an autonomous agent that learns to offload tasks to neighboring nodes in order to minimize deadline violations rather than overall latency, which is in line with our QoS requirements. Each agent receives binary rewards indicating whether the latency constraint (deadline) was met and updates its action values using the SARSA algorithm, allowing it to adaptively balance local processing load and offloading decisions without requiring centralized coordination.
In our adaptation, \textit{Dec-SARSA} is reformulated for the CC routing context, where each load balancer functions as an independent RL agent selecting service instances on a per-request basis and learning its routing policy using the SARSA algorithm. As defined in our problem statement in Section~\ref{section:problem}, load balancers operate in a fully decentralized manner without sharing information with other nodes, thus we approximate each instance’s load through its locally observed processing latency, which serves as an implicit indicator of congestion. The agent’s state representation combines recent processing latency statistics with network proximity (RTT) estimates, and actions correspond to the available service instances. Rewards are assigned based on whether a request meets its latency deadline, directly aligning with the QoS-driven objective of our problem statement. This adaptation preserves the decentralized, self-adaptive, and model-free characteristics of the original SARSA scheduler while making it suitable for decentralized load balancing in the CC environment.

The final strategy, \textit{QEdgeProxy}, is our proposed solution described in Section \ref{section:algorithms} which we configured using an $\varepsilon$-decay factor ($\gamma=0.01$), a smoothing factor ($\eta=0.01$), an error count threshold ($E_t=5$), and a cooldown duration ($\Delta_{\mathrm{cd}}=10$\,s). All strategies were implemented in \texttt{Golang}\footnote{Source code and experimental results: \url{https://github.com/AIoTwin/qedgeproxy/tree/tsc-paper}} and deployed as LBs in the same manner as our Kubernetes-native implementation (see Section \ref{section:impl-qedgeproxy}). This ensures that the comparison of different strategies is fair, as the only factor influencing performance is the routing strategy itself.

\subsubsection{Experimental Scenarios}
To ensure the robustness and generalizability of our results, each strategy was evaluated on five distinct CC topologies. By repeating experiments across multiple random topologies, we reduce the risk of bias introduced by a particular geographic configuration and obtain statistically meaningful comparisons of the evaluated strategies.

\newpage

\subsection{Aggregated Results Across Topologies}

\subsubsection{Client-Level QoS Satisfaction Rate}

Fig.~\ref{fig:qos_boxplot} reports the percentage of clients whose QoS requirements are met, aggregated across the five topologies. Across all scenarios, \textit{proxy-mity} exhibits poor QoS satisfaction, with \textit{proxy-mity 0.9} performing worst and \textit{proxy-mity 1.0} only marginally better, typically satisfying QoS for fewer than 50\% of clients. This behavior results from routing decisions based solely on network proximity, which overloads popular instances and ignores explicit QoS constraints.
\textit{Dec-SARSA} substantially improves performance, satisfying QoS for approximately 70--90\% of clients by incorporating latency deadlines into its reward function. However, it remains relatively slow in terms of adapting to overloads, as the Q-values evolve incrementally through differential updates and thus reflect historical performance more than short-term conditions, particularly for load balancers without local instances.
In contrast, \textit{QEdgeProxy} consistently achieves the highest QoS satisfaction (95--100\% of clients) across all scenarios. By assigning dynamic routing weights based on estimated QoS success probabilities and balancing traffic across a pool of feasible instances using SWRR, it adapts more rapidly and stably to changing network and load conditions.
\begin{figure}[ht]
\centering
\includegraphics[width=0.9\linewidth]{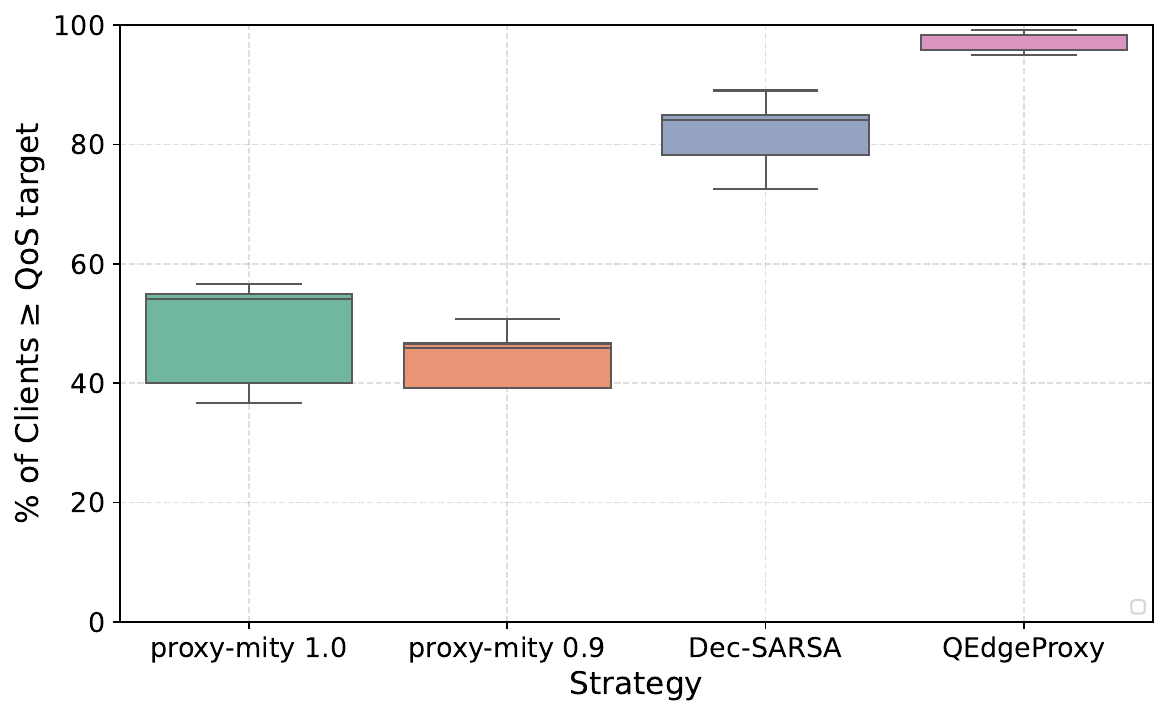}
\caption{Client QoS satisfaction rate aggregated across five scenarios.}
\label{fig:qos_boxplot}
\end{figure}

\subsubsection{Load Balancing Fairness}

We quantify the load balancing fairness using Jain’s fairness index~\cite{jain1984quantitative}, a widely used metric defined as:
\[
F(x_1, \dots, x_n) = \frac{\left(\sum_{i=1}^n x_i\right)^2}{n \cdot \sum_{i=1}^n x_i^2},
\]
where \(x_i\) denotes the load assigned to instance \(i\). The index values range between 0 (worst case) and 1 (perfect fairness). As shown in Fig.~\ref{fig:fairness_boxplot}, the \textit{proxy-mity} strategies exhibit low fairness due to their strong affinity for the nearest instance. \textit{Proxy-mity 0.9} achieves slightly better load balancing than \textit{proxy-mity 1.0}, as it sends 10\% of the requests to instances other than the closest one.  \textit{Dec-SARSA} and \textit{QEdgeProxy} both achieve fairness values of approximately 0.85--0.90. Although both mechanisms rely on performance-driven adaptation, \textit{QEdgeProxy} achieves slightly higher fairness due to its weighted routing strategy. By assigning each instance a weight proportional to its estimated probability of meeting the QoS requirements, it maintains a wider active pool of instances and prevents excessive traffic concentration on a single node, resulting in smoother and more stable load distribution than the slower, per-request adaptation of \textit{Dec-SARSA}.

\begin{figure}[ht]
\centering
\includegraphics[width=0.9\linewidth]{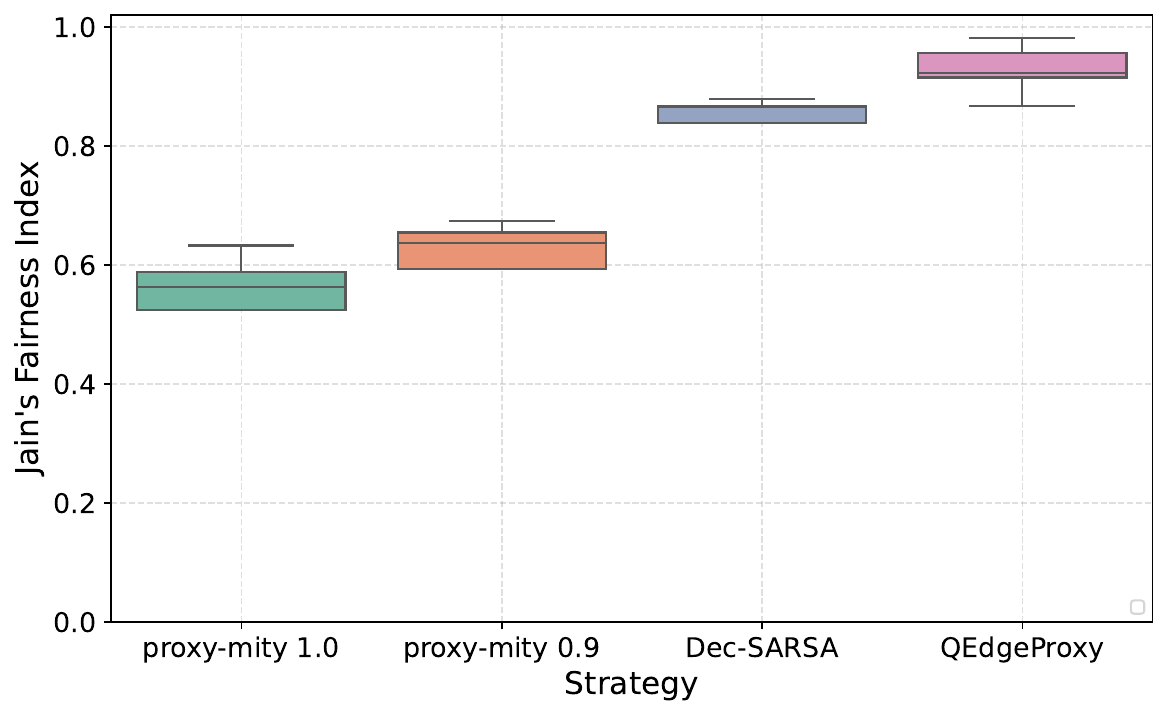}
\caption{Load balancing fairness across five scenarios.}
\label{fig:fairness_boxplot}
\end{figure}

\subsection{Per-Topology Results}

\subsubsection{Per-Client QoS Success}
A representative topology (scenario~\#1) is used to illustrate LB-level behavior. Fig.~\ref{fig:qos_violin} shows the distribution of per-client QoS success rate $\hat{\rho}$. This plot shows the distribution of success ratios among all clients, indicating whether a strategy treats clients uniformly, as defined in the objective of the problem statement, or excessively favors some of them. Client QoS success with both \textit{proxy-mity} configurations spans the full range from 0 to 100\%, reflecting highly uneven behavior. Clients with rates close to 100\% are typically single users of their closest instance, so that instance is never overloaded, while clients with rates near 0\% are the ones that share the closest instance with multiple other clients, causing its overload. Due to occasional offloading to the non-nearest instance, \textit{proxy-mity 0.9} does not achieve 100\% success for any client.
\textit{Dec-SARSA} has significantly better distribution, with most clients exceeding 85\% QoS success, but 15 clients (out of 120) still fall below the target threshold of 90\%, all connected to load balancers without a local instance. In contrast, \textit{QEdgeProxy} fails to meet the target for only three clients, all marginally below the threshold, and exhibits an overall upward shift in the distribution, indicating more uniformly high per-client QoS satisfaction.

\begin{figure}[ht]
\centering
\includegraphics[width=0.9\linewidth]{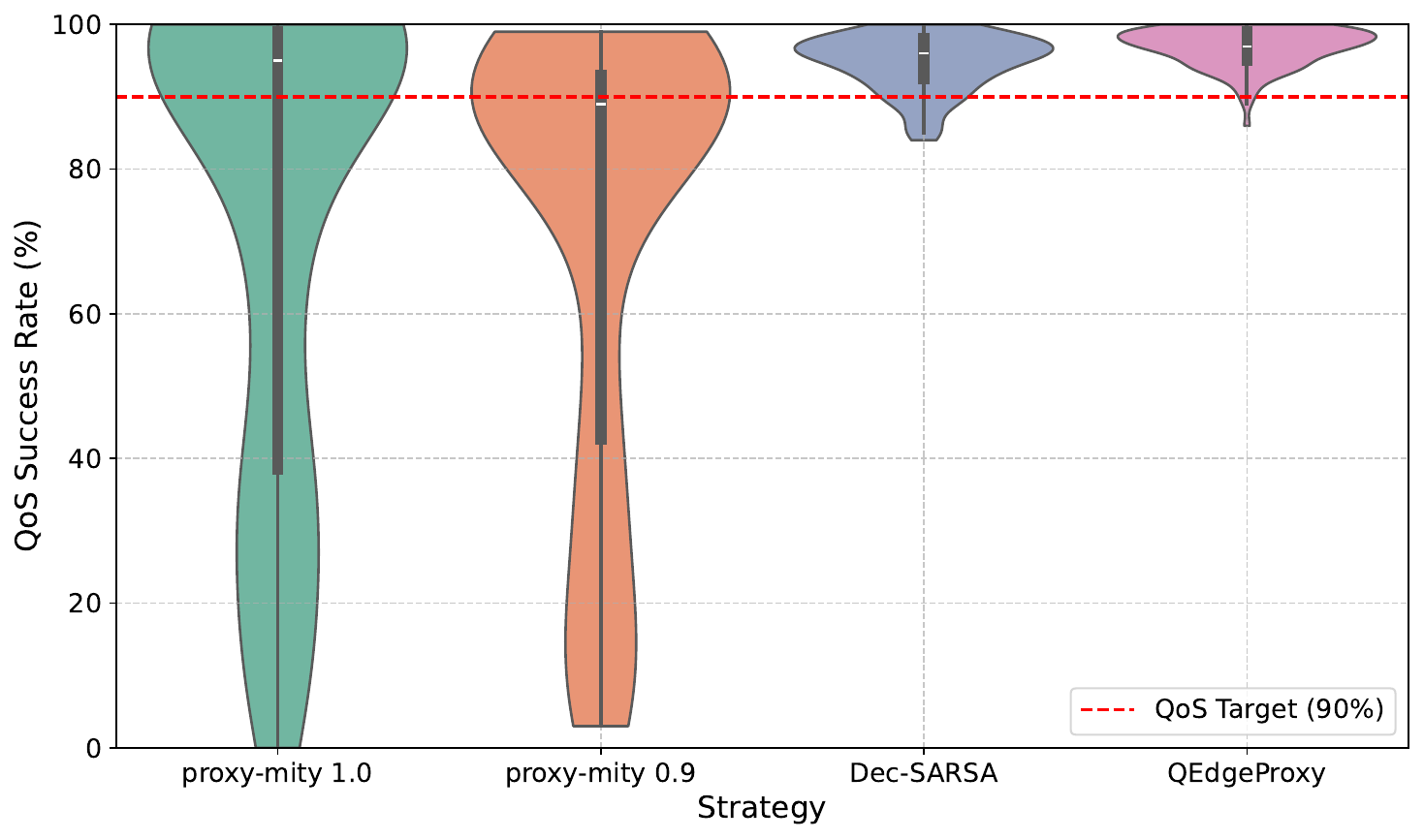}
\caption{Distribution of client QoS success rate $\hat{\rho}$ (scenario~\#1).}
\label{fig:qos_violin}
\end{figure}

\subsubsection{Rolling QoS Over Time}

Fig.~\ref{fig:rolling-qos} illustrates how the QoS success rate evolves over time, showing the rolling average computed over a 10-second window. Both \textit{proxy-mity} configurations quickly converge to a stable rate of 50--60\%, which remains nearly constant throughout the experiment. This occurs because the routing weights are fixed at initialization and never updated, so the QoS success rate depends only on instance load and response time. \textit{Dec-SARSA} converges faster than \textit{QEdgeProxy}, reaching a mostly steady success rate after approximately 30~seconds, compared to about 60~seconds for \textit{QEdgeProxy}. The slower initial convergence of \textit{QEdgeProxy} occurs because: (i) routing weights are updated at discrete intervals (apart from cooldown events), and (ii) a higher initial exploration rate delays early exploitation. However, this additional exploration allows \textit{QEdgeProxy} to better identify instances capable of fulfilling the QoS, leading to superior long-term performance. After convergence, \textit{QEdgeProxy} maintains a consistently higher QoS success rate, whereas \textit{Dec-SARSA} exhibits higher fluctuations due to its per-request routing behavior, where the policy tends to exploit a single instance until it becomes overloaded and then abruptly shifts to another instance. 

\begin{figure}[ht]
\centering
\includegraphics[width=0.9\linewidth]{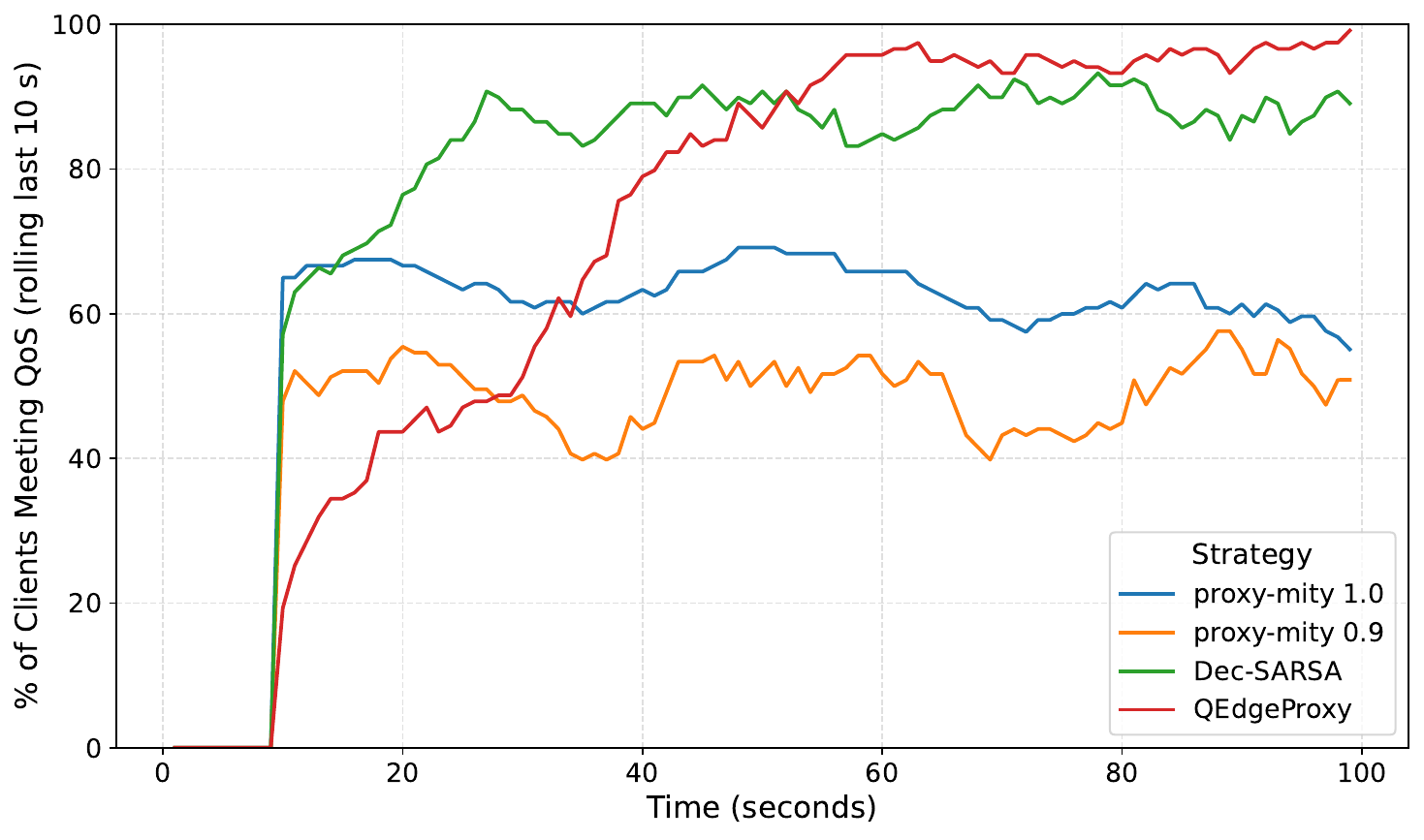}
\caption{Rolling QoS success rate (scenario~\#1).}
\label{fig:rolling-qos}
\end{figure}

\subsubsection{Request Distribution Across Instances}

Fig.~\ref{fig:request-rate} illustrates the request distribution across service instances in scenario~\#1. Both \textit{proxy-mity} variants heavily overload instance~\textit{i9}, with \textit{proxy-mity 1.0} producing the most imbalanced allocation, consistent with the fairness results in Fig.~\ref{fig:fairness_boxplot}.
In contrast, \textit{Dec-SARSA} and \textit{QEdgeProxy} exhibit similar and more balanced request distributions. Instance~\textit{i2} receives a constant 40~req/s under both strategies because it lies outside the QoS reach of all nodes except its host, and is therefore correctly excluded from most routing decisions. Reported values reflect average request rates over the experiment, while short-term load variations are discussed in the processing latency analysis.

\subsubsection{Processing Latency per Instance}

Fig.~\ref{fig:p90-latency} shows the p90 (90th percentile) processing latency per instance in scenario~\#1, highlighting the impact of routing decisions on instance congestion. As expected, processing latency closely follows the request distribution in Fig.~\ref{fig:request-rate}.
Both \textit{proxy-mity} variants overload instances~\textit{i5} and \textit{i9}, resulting in elevated latencies and QoS violations. While \textit{Dec-SARSA} and \textit{QEdgeProxy} achieve more balanced behavior, \textit{Dec-SARSA} exhibits higher latency peaks as it occasionally overloads instances before redistributing requests. In contrast, \textit{QEdgeProxy} maintains lower and more stable processing latencies by proactively distributing load across QoS-feasible instances. Processing latency must be interpreted jointly with network latency, as higher instance latency can still satisfy QoS for clients with sufficiently low network delay.

\begin{figure}[t]
\centering
\begin{minipage}[t]{0.48\columnwidth}
    \centering
    \includegraphics[width=\linewidth]{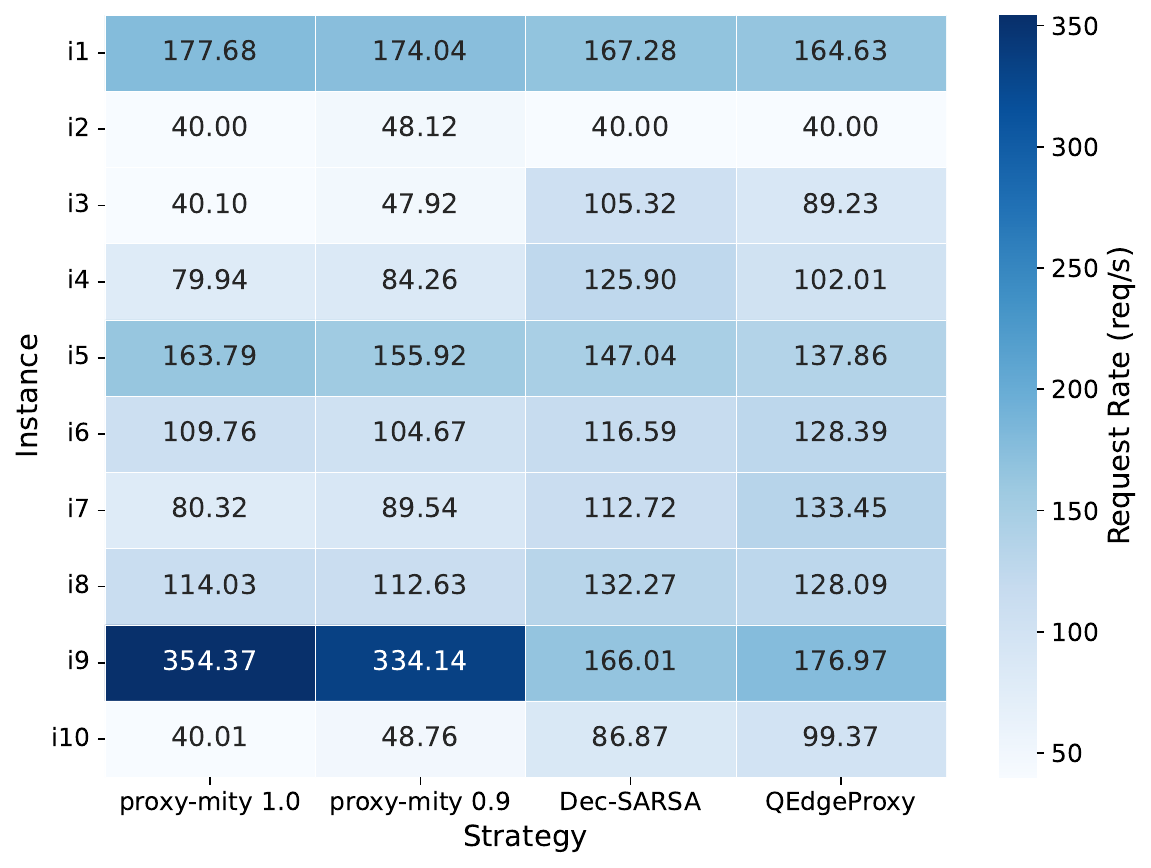}
    \caption{Request distribution across service instances (scenario~\#1).}
    \label{fig:request-rate}
\end{minipage}
\hfill
\begin{minipage}[t]{0.48\columnwidth}
    \centering
    \includegraphics[width=\linewidth]{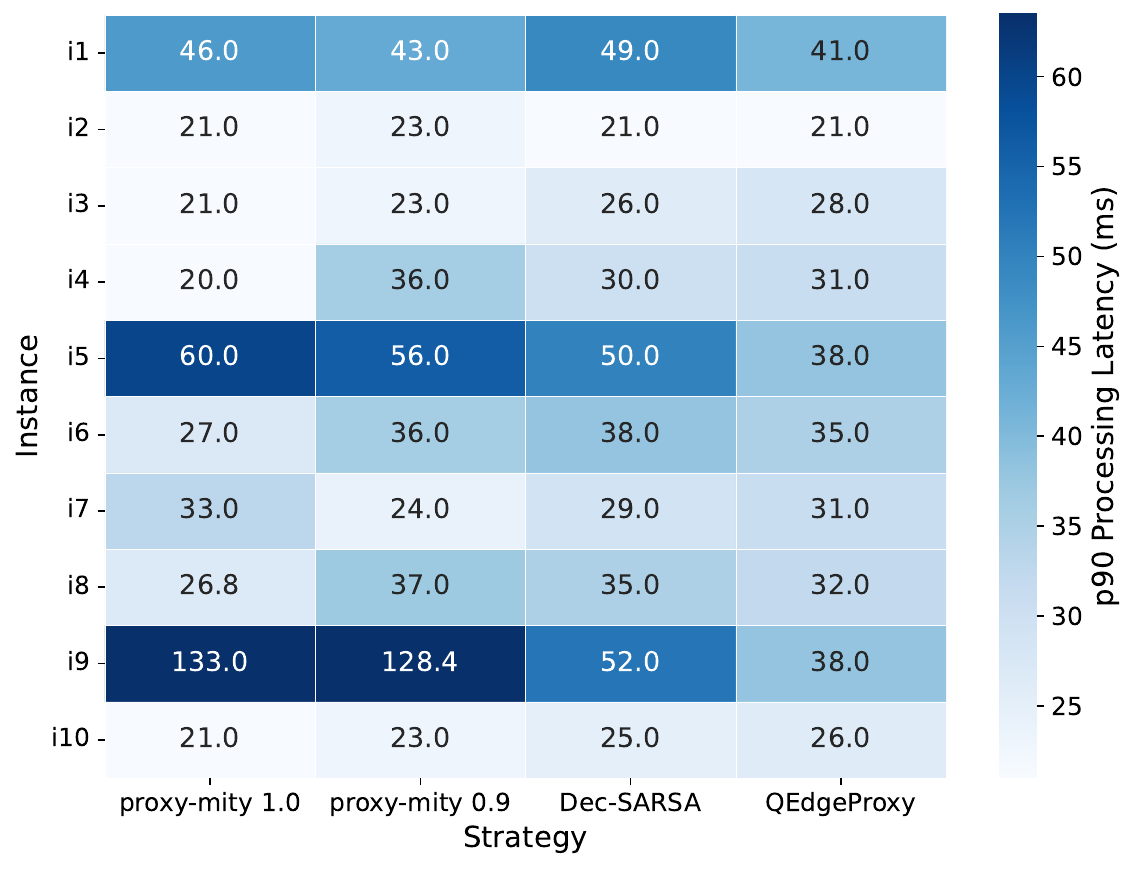}
    \caption{p90 processing latency per instance (scenario~\#1).}
    \label{fig:p90-latency}
\end{minipage}
\end{figure}

\subsubsection{Request Distribution from a Single Load Balancer}

To illustrate per-node routing behavior, Fig.~\ref{fig:requests_heatmap_lb_local} and Fig.~\ref{fig:requests_heatmap_lb_nonlocal} show the request distribution of two representative load balancers in scenario~\#1: one with a local service instance and one without. As expected, \textit{proxy-mity} routes most traffic to the closest instance (with \textit{proxy-mity 0.9} only slightly diversifying), while \textit{Dec-SARSA} mainly exploits the local instance and offloads reactively under overload. In contrast, \textit{QEdgeProxy} maintains a QoS-feasible pool and distributes requests across multiple instances, with a stronger bias toward the local instance when available and a broader pool when no local instance exists.

\begin{figure}[t]
\centering
\subfloat[LB with a local instance.]{
    \includegraphics[width=0.48\columnwidth]{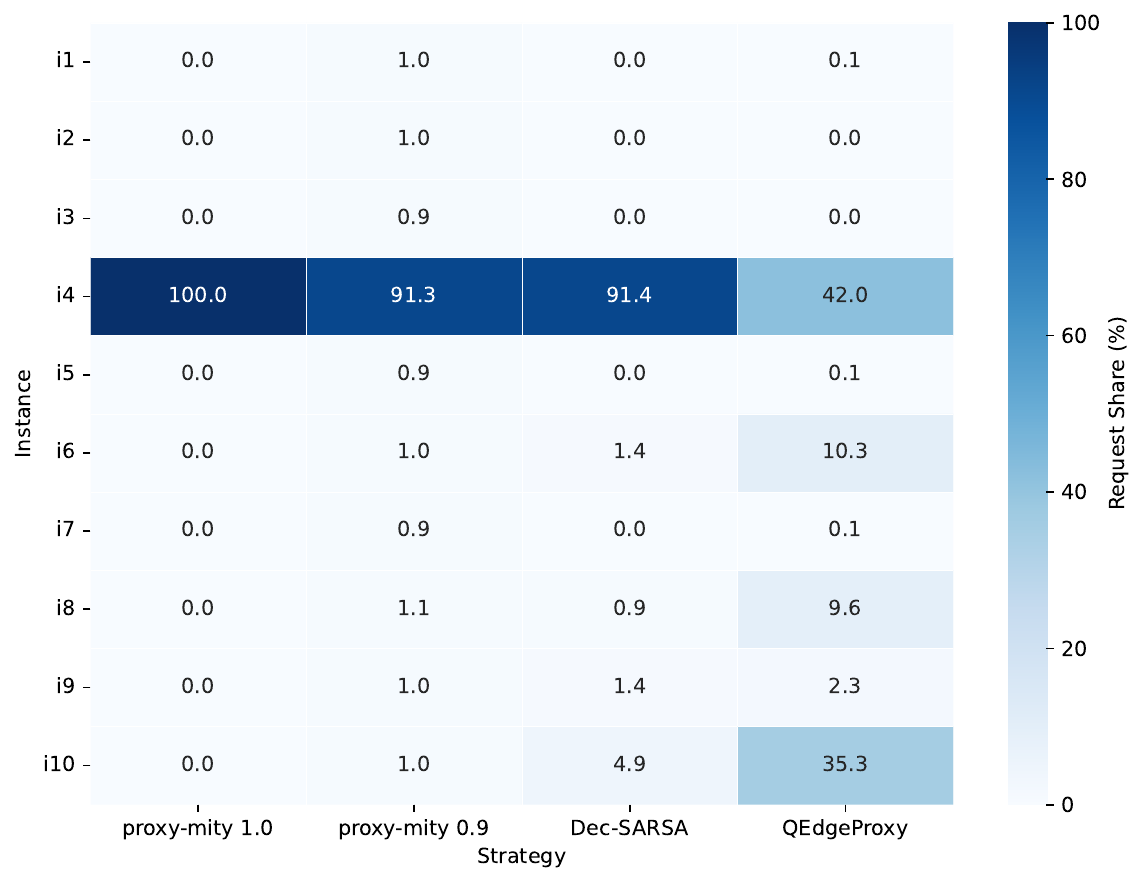}
    \label{fig:requests_heatmap_lb_local}
}
\subfloat[LB without a local instance.]{
    \includegraphics[width=0.48\columnwidth]{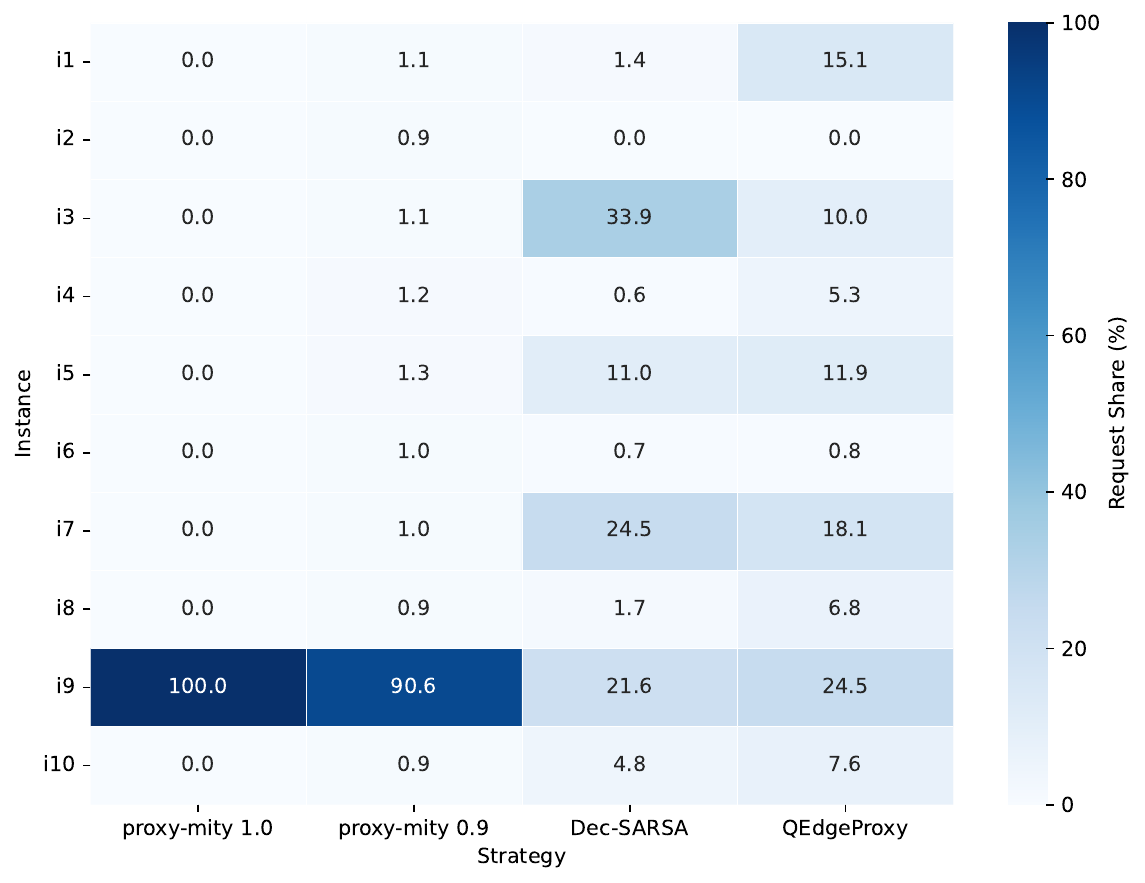}
    \label{fig:requests_heatmap_lb_nonlocal}
}
\caption{Request distribution on a single load balancer (scenario~\#1).}
\label{fig:requests_heatmap_lb}
\end{figure}

\subsection{Adaptiveness}

To evaluate how the proposed solution adapts to dynamic changes in the ECC, we conducted two additional experiments. 

\subsubsection{Client Dynamics}
To evaluate adaptiveness to client dynamics, the system starts with 60 clients (2 per LB) using scenario~\#1, and 30 additional clients are introduced mid-experiment by assigning them to 15 randomly selected LBs, creating localized load imbalances. As shown in Fig.~\ref{fig:adapt_clients}, both \textit{proxy-mity} variants initially achieve near-perfect QoS under low load but degrade sharply after the client load increase due to instance overload, with \textit{proxy-mity 0.9} exhibiting a larger drop. In contrast, \textit{Dec-SARSA} and \textit{QEdgeProxy} rapidly adapt to the increased load and stabilize around 90\% QoS success, enabled by their reactive learning mechanisms that redistribute traffic within a single QoS evaluation window.

\begin{figure}[ht]
\centering
\includegraphics[width=0.9\linewidth]{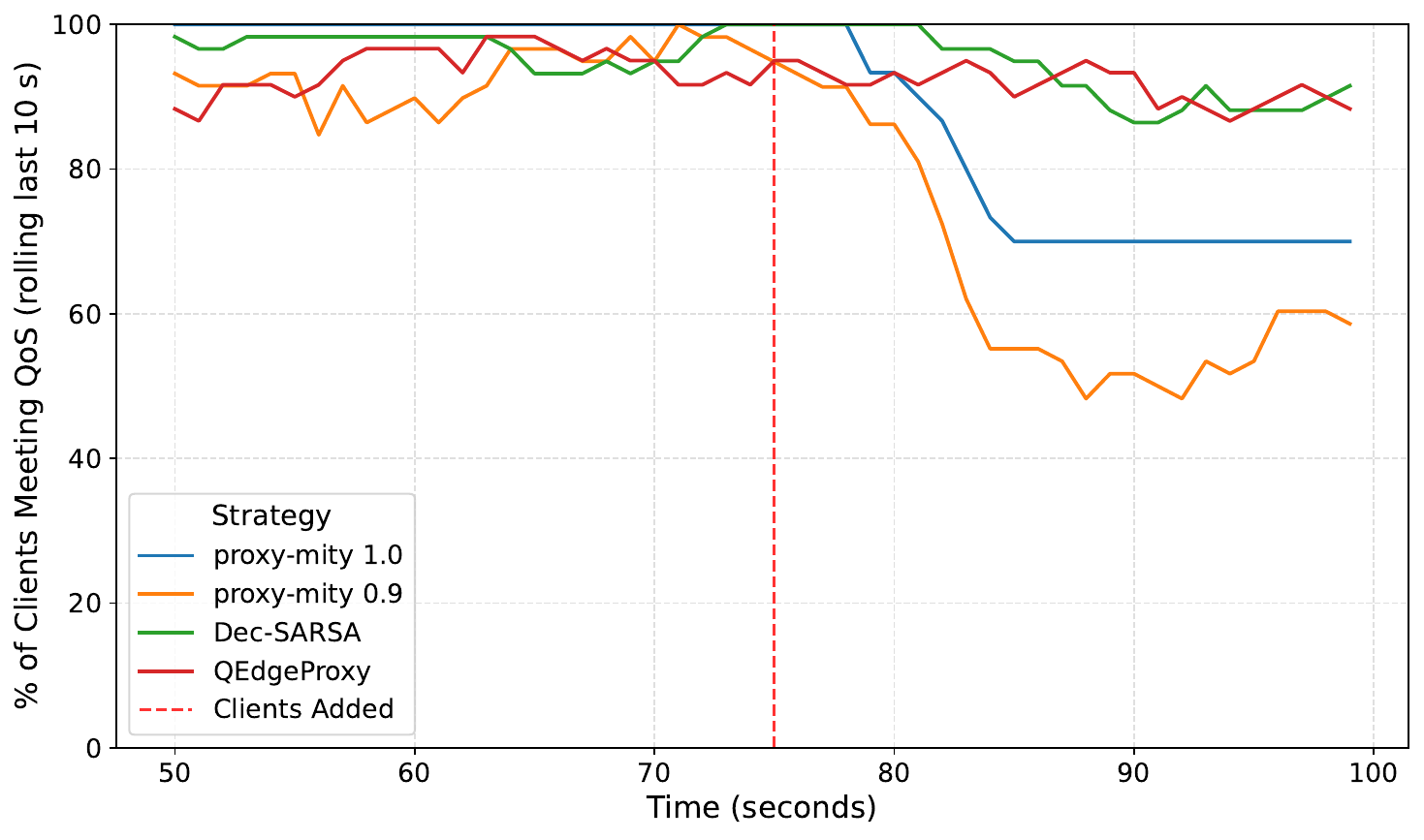}
\caption{Adaptiveness to increased client load.}
\label{fig:adapt_clients}
\end{figure}

\subsubsection{Instance Removal}

In the second adaptiveness experiment, the system starts with 120 clients and 10 instances (scenario~\#1), and one instance (\texttt{i10}) is removed mid-experiment to emulate a sudden capacity reduction. As shown in Fig.~\ref{fig:adapt_instance}, all strategies experience an initial QoS drop after the removal. The \textit{proxy-mity} variants show a limited decline, affecting mainly the clients previously served by the removed instance. Both \textit{Dec-SARSA} and \textit{QEdgeProxy} adapt to the change, but with different outcomes: \textit{Dec-SARSA} stabilizes around 80\% QoS success due to its slower, reward-driven updates, whereas \textit{QEdgeProxy} recovers faster and maintains close to 90\% QoS by dynamically recomputing instance weights and rebalancing its QoS pools.

\begin{figure}[ht]
\centering
\includegraphics[width=0.9\linewidth]{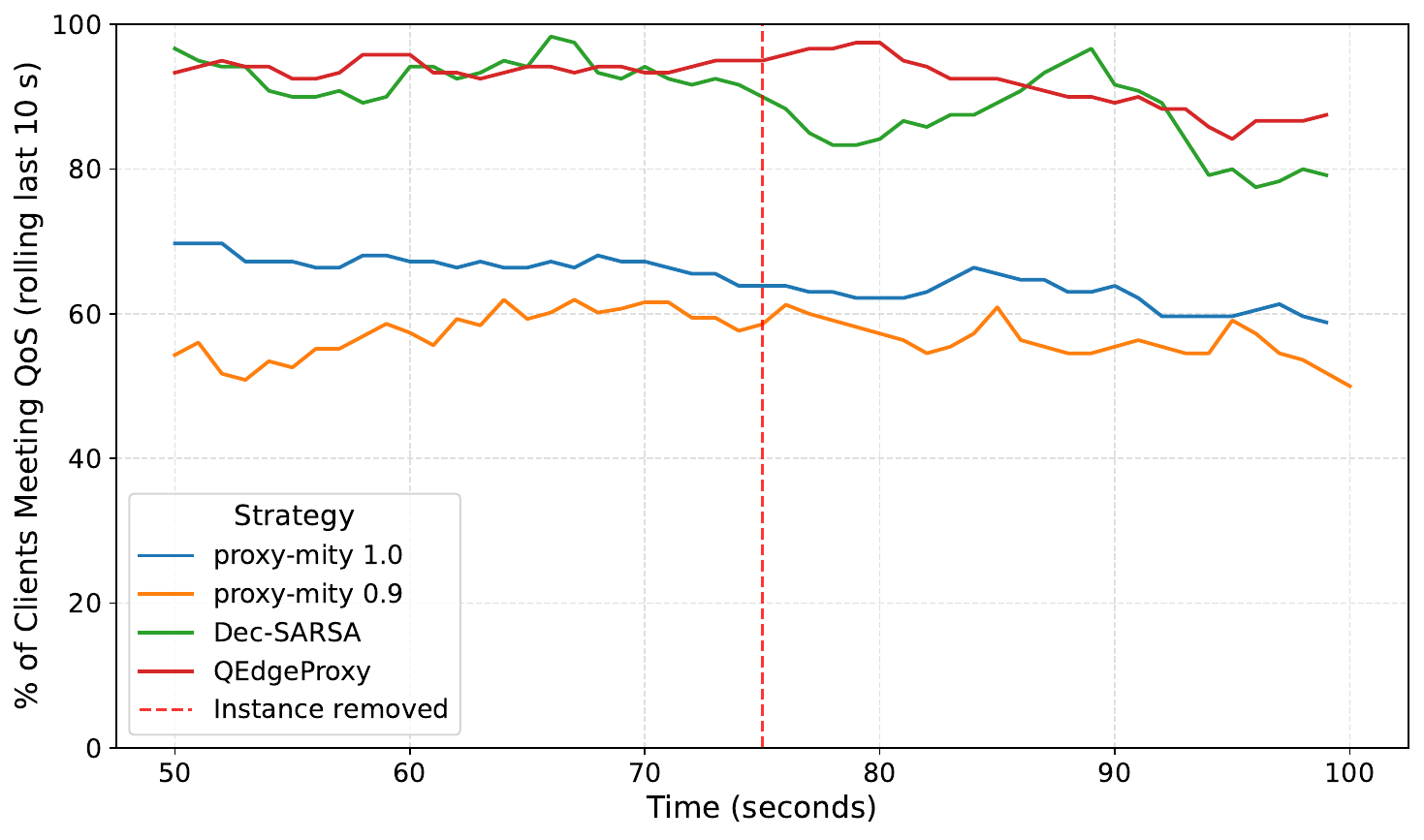}
\caption{Adaptiveness to instance removal.}
\label{fig:adapt_instance}
\end{figure}

\subsection{Resource Footprint}

During the experiment, we recorded multiple snapshots of \textit{QEdgeProxy}'s resource consumption under peak request load (4 clients, $40$~req/s) to evaluate its resource overhead in an operational environment. The results show an average CPU utilization of \textbf{0.13~cores} and memory usage of \textbf{60~MB}, confirming that the proxy can operate efficiently on resource-constrained CC nodes such as the Raspberry~Pi.

\subsection{Discussion and Limitations}
\label{section:limitations}

Several directions may further enhance our proposed load-balancing solution:
\begin{itemize}
    \item \textbf{Multi-dimensional QoS objectives.}  
    The current mechanism targets latency-based QoS requirements. Future extensions could incorporate additional QoS parameters such as reliability, energy consumption, or privacy constraints, enabling instance selection based on multi-objective criteria, such as studied by~\cite{ASLANPOUR2024266}.
    \item \textbf{Service chains and interdependent workloads.}  
    Real-world IoT applications often involve multi-stage pipelines where end-to-end QoS depends on cumulative QoS across several services. Extending QEdgeProxy to account for service chains would significantly broaden its applicability.
    \item \textbf{Collaborative prediction across neighboring LBs.}  
    Exploring mechanisms for neighboring LBs to share local observations or summarized QoS statistics could improve QoS prediction quality, especially in highly dynamic CC environments.
\end{itemize}

\section{Conclusion}
\label{section:conclusion}

This work addressed the challenge of ensuring per-client QoS satisfaction for latency-sensitive IoT applications in the dynamic and heterogeneous Computing Continuum (CC). 
We formulated decentralized load balancing in the CC as a Multi-Player Multi-Armed Bandit (MP-MAB) problem with heterogeneous rewards and proposed \textit{QEdgeProxy}, a lightweight decentralized load balancer (LB) deployed on every CC node.

QEdgeProxy maintains a \textit{QoS pool}---a dynamically updated set of service instances predicted to satisfy the given QoS requirements. 
The QoS pool is constructed using Kernel Density Estimation (KDE) over a sliding window of recent observations to estimate instance-level QoS success probabilities. 
Routing decisions combine these estimates with an adaptive $\varepsilon$-decay exploration scheme that dynamically adjusts in response to observed QoS satisfaction, and Smooth Weighted Round Robin (SWRR) routing, enabling each LB to balance exploitation of high-quality instances with controlled exploration of alternatives. 
This design allows QEdgeProxy to avoid instance overloads, quickly identify improving or newly deployed instances, and deliver stable performance under time-varying network and workload conditions.

We implemented the proposed mechanism as a Kubernetes-native component and evaluated it on a real distributed testbed comprising a 10-node K3s cluster emulating a 30-node CC topology with realistic latency distributions. 
Using a latency-sensitive edge-AI workload and comparing to the proximity-based and reinforcement learning baselines, the evaluation demonstrated that QEdgeProxy:
\begin{itemize}
    \item achieves the highest per-client QoS satisfaction across all tested topologies (95--100\% of clients meeting the target),
    \item provides more balanced request distribution than given baselines,
    \item adapts robustly to dynamic changes, such as client-load surges and instance removals, with faster recovery and more stable convergence.
\end{itemize}

These results show that decentralized MP-MAB-based load balancing with KDE-driven QoS estimation is a practical and effective foundation for QoS-aware load balancing in the CC, particularly in environments where global coordination is costly or impractical.

\bibliographystyle{IEEEtran}
\bibliography{IEEEabrv,references.bib}

\vspace{-1cm}

\begin{IEEEbiography}[{\includegraphics[width=1in,height=1.25in,clip,keepaspectratio]{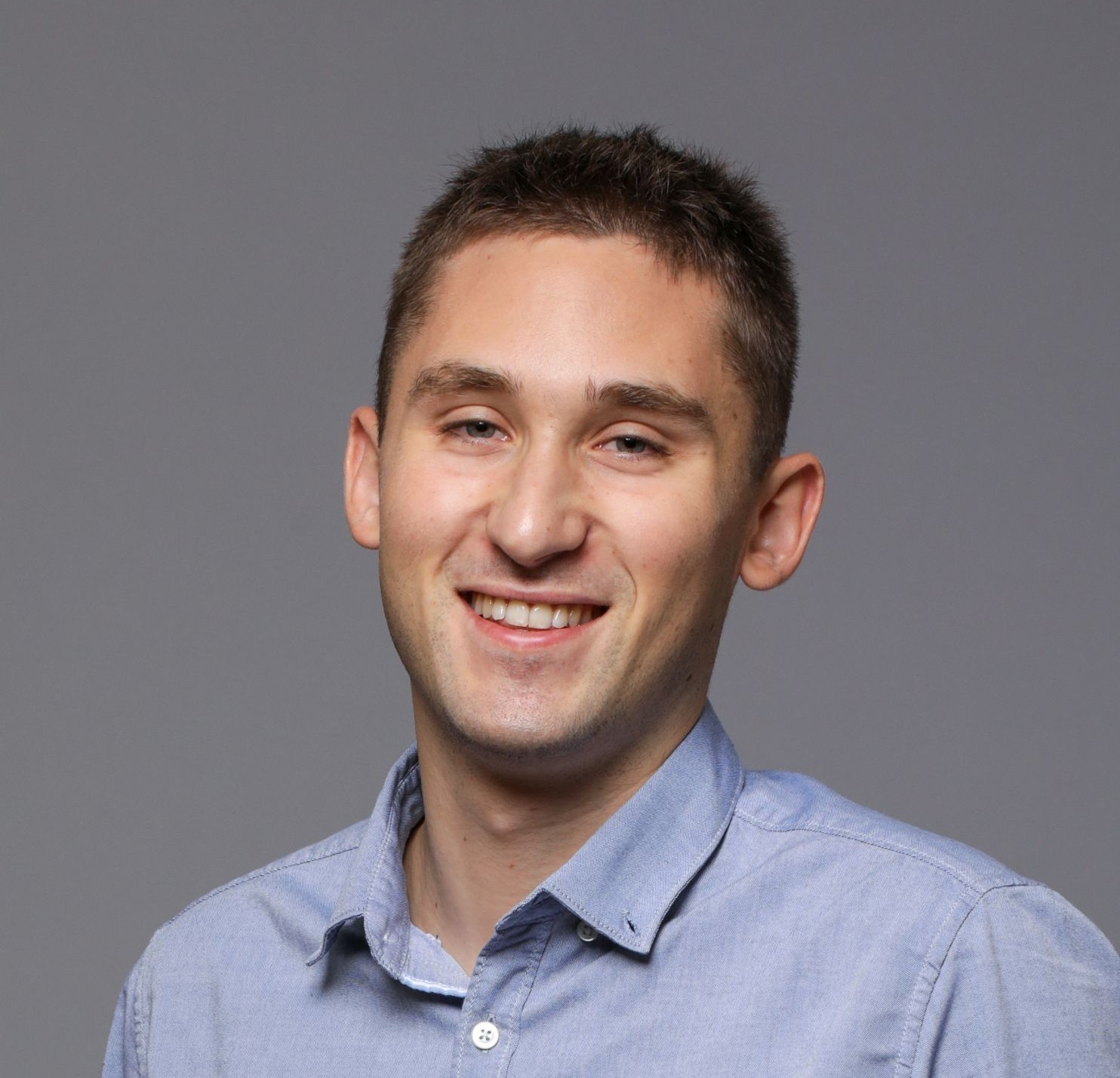}}]{Ivan Čilić}
received the M.Sc. degree in information and communication technology from the Faculty of Electrical Engineering and Computing, University of Zagreb, Croatia, in 2020, where he is currently pursuing the Ph.D. degree in computing. He is a Leading Researcher with the Department of Telecommunications at the same faculty. His research interests include edge computing, service orchestration, load balancing and federated learning.
\end{IEEEbiography}

\vspace{-1cm}

\begin{IEEEbiography}[{\includegraphics[width=1in,height=1.25in,clip,keepaspectratio]{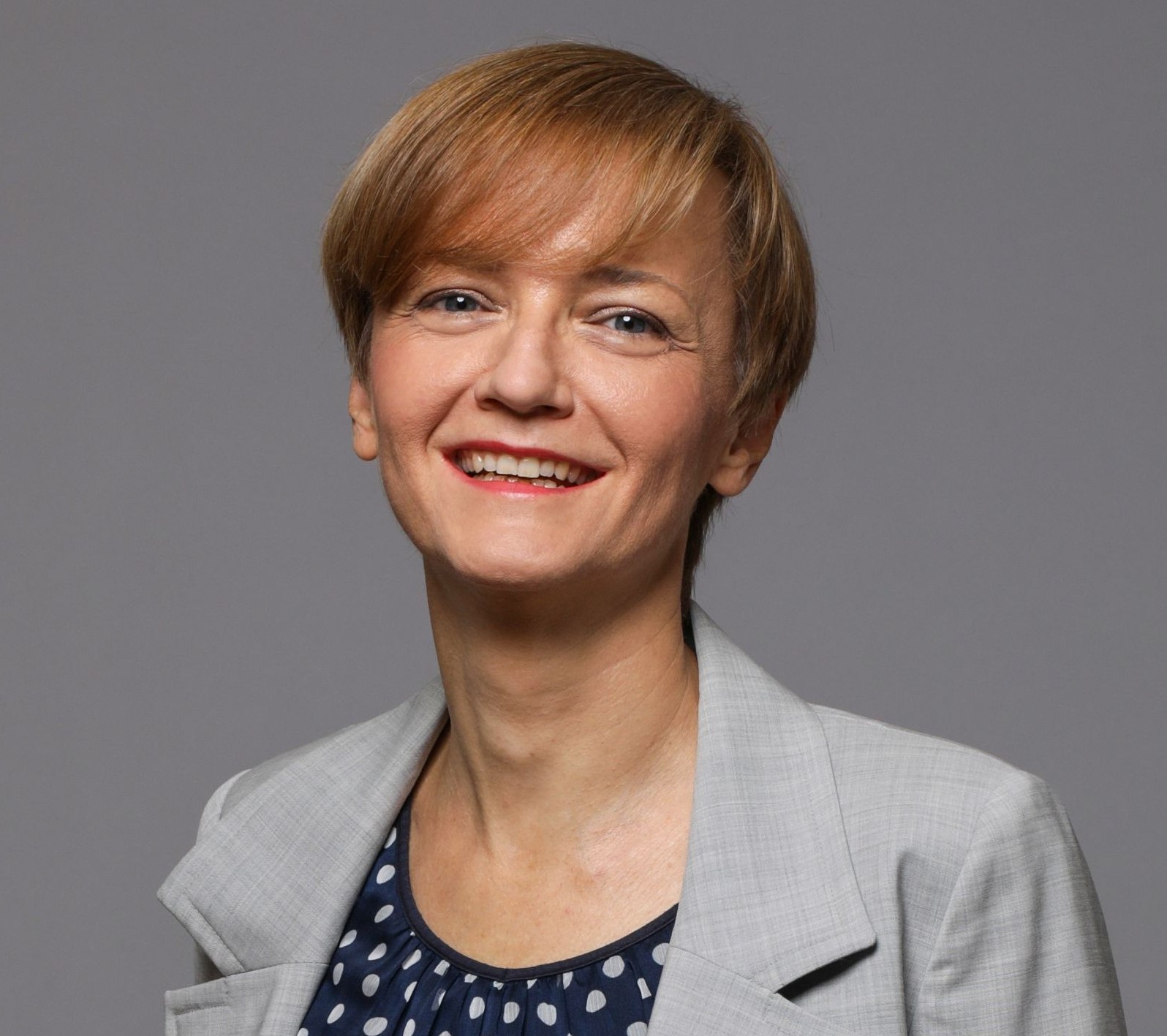}}]{Ivana Podnar Žarko}
is a Full Professor at the Faculty of Electrical Engineering and Computing, University of Zagreb, Croatia, where she leads the Internet of Things Laboratory. She has participated in numerous national and European research projects, including serving as Technical Manager of the H2020 project symbIoTe, and currently coordinates the Horizon Europe project AIoTwin. Her research interests include large-scale distributed systems, IoT interoperability, and edge computing. She has co-authored over one hundred scientific publications and has served on program committees and editorial boards of several international conferences and journals.
\end{IEEEbiography}

\vspace{-1cm}

\begin{IEEEbiography}[{\includegraphics[width=1in,height=1.25in,clip,keepaspectratio]{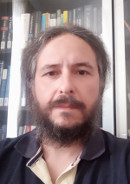}}]{Pantelis A. Frangoudis}
is a researcher with the Distributed Systems Group, TU Wien, Austria. He has been with the Communication Systems Department, EURECOM, France (2017–2019), and with team DIONYSOS at  RISA/INRIA Rennes, France (2012–2017), which he originally joined under an ERCIM “Alain Bensoussan” postdoctoral fellowship. He has a Ph.D. (2012) in Computer Science from AUEB, Greece. His interests include mobile and wireless networking, edge and cloud computing, and Internet multimedia.
\end{IEEEbiography}

\vspace{-1cm}

\begin{IEEEbiography}[{\includegraphics[width=1in,height=1.25in,clip,keepaspectratio]{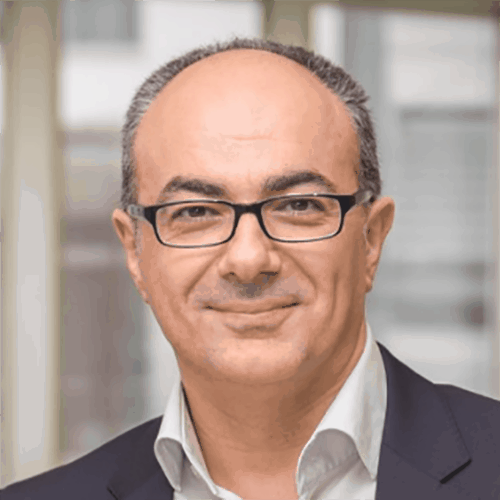}}]{Schahram Dustdar}
(Fellow, IEEE) is a Full Professor of Computer Science (Informatics) with a focus on Internet Technologies, heading the Distributed Systems Group at the TU Wien. He is the founding co-Editor-in-Chief of ACM Transactions on Internet of Things (ACM TIoT) as well as Editor-in-Chief of Computing (Springer). He is an Associate Editor of IEEE Transactions on Services Computing, IEEE Transactions on Cloud Computing, ACM Computing Surveys, ACM Transactions on the Web, and ACM Transactions on Internet Technology, as well as on the editorial board of IEEE Internet Computing and IEEE Computer. He is a recipient of multiple awards: TCI Distinguished Service Award (2021), IEEE TCSVC Outstanding Leadership Award (2018), IEEE TCSC Award for Excellence in Scalable Computing (2019), ACM Distinguished Scientist (2009), ACM Distinguished Speaker (2021), IBM Faculty Award (2012).
\end{IEEEbiography}

\newpage

\vfill

\end{document}